\documentclass{article}

\usepackage{PRIMEarxiv}

\usepackage[utf8]{inputenc} 
\usepackage[T1]{fontenc}    
\usepackage{hyperref}       
\usepackage{url}            
\usepackage{booktabs}       
\usepackage{amsfonts}       
\usepackage{nicefrac}       
\usepackage{microtype}      
\usepackage{graphicx}
\usepackage{multirow}
\usepackage[numbers]{natbib}
\usepackage{float}
\usepackage{tabularx}

\usepackage{booktabs}
\usepackage{amsmath}
\usepackage{amssymb}
\usepackage{amsthm}
\usepackage{bm}
\usepackage{subcaption}
\usepackage{tabularx}
\usepackage{thmtools}
\usepackage{thm-restate}
\usepackage{multirow}

\newtheorem{proposition}{Proposition}
\newtheorem{definition}{Definition}[section]
\newtheorem*{remark}{Remark}

\usepackage{lipsum}
  
\title{Can LLMs Rank? A Tale of Triads and Triage
}

\author{
  Gaurab Pokharel \\
  Virginia Tech\\
  Alexandria, VA\\
  \texttt{gaurab@vt.edu} \\
\And
  Shafkat Farabi \\
  Virginia Tech\\
  Alexandria, VA\\
  \texttt{mfarabi@vt.edu} \\
\And
  Patrick J. Fowler \\
  Washington University in St.\ Louis\\
  St. Louis, MO\\
  \texttt{pjfowler@wustl.edu} \\
\And  
  Sanmay Das \\
  Virginia Tech\\
  Alexandria, VA\\
  \texttt{sanmay@vt.edu} \\
}

\begin{document}
\maketitle

\begin{abstract}
From housing allocation for households experiencing homelessness to triage in emergency departments, LLMs are increasingly being considered as judges of consequential decisions that require ranking people for scarce resources. Ranking large groups simultaneously is cognitively demanding and error-prone. A natural solution, drawing on decades of social choice theory,  elicits pairwise comparisons and aggregates them into a total order. However, a fundamental question remains when LLMs serve as the pairwise judge: how can a practitioner tell, before committing to a ranking, whether the LLM's judgments are sufficiently consistent to trust the result? We discuss two different ways of identifying consistency. A classical diagnostic, the coefficient of consistency $\zeta$, originally developed to measure judge reliability by counting circular triads in tournament graphs, provides a cheap, model-free measure of intra-run consistency. Various standard measures of distance between rankings, for example Kendall's $\tau$, can measure inter-run variability. We show, in both theory and practice, that these measures are independently valuable, and advocate for using both to assess reliability of rankings. We demonstrate the practical importance of our results across two high-stakes prioritization tasks: homelessness service allocation and emergency department triage. Three different leading LLMs have considerably different performance profiles across these two axes of consistency. We provide guidelines for how practitioners could think about measuring and assessing consistency before committing to a model for ranking or prioritization.
\end{abstract}

\keywords{large language models \and pairwise comparisons \and rank aggregation \and resource allocation \and algorithmic decision-making \and coefficient of consistency}

\section{Introduction}
\label{sec:introduction}

Ranking represents a ubiquitous challenge in high-stakes societal resource allocation settings. Caseworkers must prioritize households for receiving emergency housing, and triage nurses must prioritize patients in the emergency room,  for example.  
These frontline workers -- caseworkers, triage nurses, benefits administrators -- typically apply professional judgment guided by standardized instruments. In homeless services, communities use different prioritization scores to rank households for housing interventions (for many years, they used the now-disfavored Vulnerability Index - Service Prioritization Decision Assistance Tool (VI-SPDAT), for example~\cite{orgcode2015vispdat,orgcode2015vifspdat,orgcode2015tayvispdat}). In emergency departments, the Emergency Severity Index (ESI)~\cite{wolf2023esi} structures nurse assessments into a five-level acuity rating that determines who is seen first.

Interest in using large language models (LLMs) to assist with prioritization decisions continues to grow, as LLMs demonstrate increasingly sophisticated reasoning capabilities~\cite{WAN2025101859,10.1145/3772318.3791045,williams2024use}. Interest aligns with the broader LLM-as-a-judge paradigm, gaining traction for tasks that once required trained human evaluators~\cite{chatbotarena, pairs}. Recent work has examined LLM judgments across domains, such as social-service interventions and medical acuity, finding that while models can engage with the task, their assessments are often inconsistent and unreliable~\cite{stereetLevelAI, williams2024use, benhaim2024triage, sayed2025triage}. If LLMs are to inform prioritization in such situations, how can a practitioner assess whether the resulting rankings are sufficiently consistent to trust?

To begin answering this crucial question, we first examine how LLMs are actually deployed in ranking tasks. It is well known that LLM outputs are sensitive to presentation order, list length, and prompt framing~\cite{llm_not_fair, chatbotarena, Saito_Wachi_Wataoka_Akimoto_2023}, making them fragile for ranking a large number of cases at once. An alternative, grounded in social-choice theory~\cite{condorcet1785essai, kendall1940method}, is to elicit \emph{pairwise} comparisons. This involves asking a model to repeatedly prioritize one out of two cases based on some criteria (e.g., urgency) and then aggregate the responses into a total order. Pairwise elicitation simplifies each judgment, controls context length, and allows principled aggregation methods such as Rank Centrality~\cite{negahban2017rank} or Borda Count~\cite{borda}. 

Two broad approaches have been used in prior literature to examine the reliability of such rankings. The first reviews intra-run consistency in the rankings, which can be quantified by a measure of transitivity violations~\cite{kendall1940method}.
The alternative examines some measure of stability or variance of the ranking outcome, typically when the variance is over the subset of comparisons used in aggregation~\cite{Ameli_Zhuang_Stoica_Mahoney_2024, rankfusion}. However, to our knowledge, the empirical relationship between the two methods of measuring consistency has not yet been explored. Moreover, we are unaware of research that examines whether potential consistency or inconsistency in evaluation has implications for the reliability and trustworthiness of AI applications in high-stakes prioritization. These are the gaps that this paper explores.

\subsection{Background and Motivation}
In principle, one can construct a ranking method that is always internally consistent (no transitivity violations), yet has high variance. Consider a method that randomly permutes the candidates and reports all pairwise comparisons consistent with the chosen permutation. It is also possible that a system with low intra-run consistency nevertheless produces low-variance outcomes. An example of this would be any method that always outputs a fixed ranking (say, lexicographic) on the whole set of choices but encodes an arbitrary number of triad transitivity violations into its pairwise reporting. This makes clear that, at least in theory, intra-run consistency (IC) and inter-run variance (IV) are independent axes, necessitating a $2 \times 2$
taxonomy illustrated in Table~\ref{tab:ranking-methods}.

\begin{table}[ht]
\centering
\setlength{\tabcolsep}{5pt}
\renewcommand{\arraystretch}{1.3}
\newcommand{\icstrut}{\rule[-0.55cm]{0pt}{1.3cm}}
\begin{tabularx}{\columnwidth}{|c|c|>{\centering}X|>{\centering\arraybackslash}X|}
    \cline{3-4}
    \multicolumn{2}{c|}{} & \multicolumn{2}{c|}{\textbf{Inter-run variance (IV)}} \\
    \cline{3-4}
    \multicolumn{2}{c|}{} & \textbf{Low} & \textbf{High} \\
    \hline
    \multirow{2}{*}{\rotatebox[origin=c]{90}{\shortstack{\textbf{Intra-run}\\\textbf{consistency}\\\textbf{(IC)}}}}
        & \textbf{High}\icstrut & Borda, consistent triads & Random-permutation \\
    \cline{2-4}
        & \textbf{Low}\icstrut & Fixed-cycle & Noisy method, inconsistent triads \\
    \hline
\end{tabularx}
\caption{Ranking methods by intra-run consistency (IC) and inter-run variance (IV).}
\label{tab:ranking-methods}
\end{table}

The question we ask is where in this taxonomy a combination of an LLM-based (or other) approach and real-world data place us, and what factors drive intra-run consistency and inter-run variance. To measure intra-run consistency, we turn to the coefficient of consistency $\zeta$. Introduced by~\citet{kendall1940method}, $\zeta$ aims to measure judge reliability by counting circular triads in tournament graphs. Circular triads, i.e., triples $\{a,b,c\}$ for which $a \succ b \succ c \succ a$, are the minimal unit of intransitivity. These correspond to the curl component of the Hodge decomposition of~\citet{jiang2011hodgerank}; $\zeta$ is a scalar summary of its magnitude. We use $\zeta$ rather than the full decomposition because it distills inconsistency into a singular scalar that a practitioner can compute without specifying a generative model. We measure inter-run variance using Kendall's $\tau$ across runs~\cite{kendalltau}.

\subsection{Related Work}
\paragraph{Pairwise Comparison and Rank Aggregation}

The problem of aggregating pairwise comparisons into a global ranking has a long history in social choice and statistical decision theory. It dates back at least to  \citet{condorcet1785essai}'s study of majority preferences. Classical approaches include positional voting rules, such as Borda count \cite{borda} and probabilistic models for pairwise outcomes. In particular, Thurstone's Case V model \cite{thurstone1994law} and the Bradley-Terry-Luce model \cite{bradley1952rank, luce1959individual} provide foundational frameworks in which each item is associated with a latent score and pairwise choices are generated based on score differences. More recent work has developed scalable ranking algorithms for large pairwise-comparison datasets, including spectral methods such as Rank Centrality \cite{negahban2017rank}. Rank Centrality (RC) constructs a Markov chain from observed pairwise comparisons and estimates item rankings from its stationary distribution. This makes RC a natural fit for our setting of pairwise prioritization judgments.

\paragraph{Consistency Diagnostics for Tournaments}
\citet{kendall1940method} introduced the circular triad count and the coefficient of consistency $\zeta$ as a test of judge reliability in paired-comparison experiments; subsequent work developed distributional theory and extensions \citep{David_1969, alway1962distribution, Iida_2009}. These classical diagnostics operate on tournament graphs and require no generative model. A more recent line of work takes a different approach. \citet{jiang2011hodgerank} decomposes the space of pairwise comparison data into three orthogonal components using the Hodge decomposition on graphs: a gradient component (consistent with a global ranking), a curl component (local cyclicity, the source of circular triads), and a harmonic component (global inconsistencies that cannot be resolved locally). The $L_2$ norm of the curl component quantifies the total cyclicity in the data, and in this sense $\zeta$ can be understood as a scalar summary of curl magnitude: a tournament with $\zeta$ near 1 has negligible curl, while one with $\zeta$ near 0 has maximum. \citet{Perotti2024hodgerank} study HodgeRank under noise, examining how the decomposition behaves when pairwise data are corrupted. We use $\zeta$ because it can be easily computed from the tournament adjacency matrix in closed form without solving an optimization problem and can be used by any practitioner as a summary diagnostic for inconsistency without depending on the specifics of the generative model.

\paragraph{LLM-as-a-Judge -- Reliability and Intransitivity} As interest grows in using LLMs for pairwise prioritization tasks \citep{Pokharel_2025}, a parallel literature on the LLM-as-a-judge paradigm \citep{chatbotarena} has identified systematic failure modes in LLM pairwise judgments, including sensitivity to presentation order \citep{llm_not_fair, chatbotarena}, verbosity bias \citep{Saito_Wachi_Wataoka_Akimoto_2023}, and self-preference \citep{self_preference}. Beyond these biases, recent work documents a subtler problem: LLM pairwise preferences are frequently intransitive. \citet{nonTransitivityLLMJudge} find that non-transitivity is more prevalent when the items being compared are close in quality. \citet{rankfusion} count circular triads directly in passage-ranking tournaments ($n = 100$), \citet{llmJudge} and \citet{whoToTrustLLMJury} report directed 3-cycle rates on text-evaluation tasks, and several other studies measure related constructs using $k$-subset violation ratios \citep{trustjudge}, strongly connected components \citep{elspr}, and intransitivity rates under various aggregation schemes \citep{improvLLMasJudge}. The consistent finding is that intransitivity is common, varies between models and tasks, and worsens for weaker judges. Crucially, however, all report consistency metrics and ranking accuracy in separate tables. None frames the question in terms of the two independent axes we identify here: intra-run consistency and inter-run variance. 

\paragraph{AI in High-Stakes Resource Allocation}

Many public service allocation problems reduce to ranking under scarcity: caseworkers must decide which households receive limited housing interventions, triage nurses determine which patients are seen first, and child welfare workers choose which families receive intensive services \citep{cheng2022childwelfare, kube2023fair, Persad_Wertheimer_Emanuel_2009}. Standardized instruments, such as the VI-SPDAT \citep{orgcode2015vispdat, orgcode2015vifspdat, orgcode2015tayvispdat} in homelessness services and the Emergency Severity Index \citep{wolf2023esi} in emergency departments, exist precisely to impose consistency on these rankings. However, they have well-documented limitations, including racial bias \cite{cronley2022, wilkey2019coordinated} and weak predictive validity \citep{brown2018reliability}. Policies often mandate prioritizing the most vulnerable \citep{hud2017_cpd1701}, while the outcomes on which workers are evaluated favor whoever would benefit most from the intervention \citep{kube2023fair, Persad_Wertheimer_Emanuel_2009}. Certainly, frontline workers exercise professional discretion to navigate this tension \citep{lipsky1980street, pokharel2024discretionary}. Algorithmic tools are increasingly deployed in these settings to assist or replace human judgment \citep{angelova2023algorithmic, saxena2024algorithmic, starke2022fairnessperceptions}, raising questions about whether automated rankings can be trusted when the consequences are borne by vulnerable populations.
 
Recently, \citet{stereetLevelAI} explicitly examined LLM judgments in social service triage. They found that while models can engage meaningfully with prioritization tasks, their pairwise assessments are inconsistent and unreliable. 
In this paper, rather than asking whether LLMs can replicate expert rankings, we ask a more fundamental question: are LLMs internally consistent (a basic requirement for trustworthiness) in their prioritization rankings? The diagnostic framework we develop is domain-general, but the motivation is grounded in these high-stakes settings where an unreliable ranking could have direct consequences for people's lives.

Although we study allocation settings where fairness is a central concern, this work is not specifically about group fairness: we do not measure disparities across protected groups, and our diagnostics cannot detect a judge that is biased in a perfectly consistent way. The fairness concern our results speak to is instead procedural. Consistency of treatment is a canonical criterion of procedural justice \citep{leventhal1980equity}, and this procedural lens has been extended to algorithmic decision-making \citep{grgichlaca2018beyond,lee2019procedural}. An individual whose priority is determined by a judge riddled with circular triads receives a rank that depends on arbitrary features of how those contradictions happen to be resolved. Such arbitrariness becomes morally significant precisely when a single system's decisions are applied at scale \cite{creel2022leviathan}. This harm can persist even when standard group-fairness metrics are satisfied.

\subsection{Contributions}

We study the relationship between $n$, the number of candidates; $\zeta$, the Coefficient of Consistency (measuring intra-run consistency); and $\tau$, the rank distance (measuring inter-run variability) in both theory and practice.

\begin{enumerate}
    \item We theoretically characterize the behavior of $\zeta$ as the number of candidates $n$ increases. Of particular interest, we show that if the rate at which pairwise comparisons produce transitivity violations stays constant, $\zeta$ converges rapidly to a fixed floor and can be reliably estimated from small samples.
    
    \item In a common stylized model (synthetic tournaments generated from Bradley-Terry-Luce (BTL) models~\citep{bradley1952rank, luce1959individual} with known ground-truth orderings), we empirically show that $\zeta$ increases linearly with $n$ and tracks $\tau$ in tight lockstep.
    
    \item In the same model, we demonstrate a divergence in the behaviors of $\tau$ and $\zeta$ when $n$ is held constant, but the sample size of comparisons available to the rank aggregator changes. $\tau$ increases with this sample size, while $\zeta$ is, by definition, constant in expectation.
    
    \item We show that empirical behavior on real data is markedly different as a function of $n$. Across a range of datasets that span three different populations experiencing homelessness, as well as patients in emergency rooms who need triage, we show that $\zeta$ stabilizes near its asymptotic floor as $n$ increases, while $\tau$ continues to increase.
    
    \item On these real-world datasets, we also show the differences between 3 major LLM models (LLaMA, Qwen, and DeepSeek) in terms of their performance on the $\zeta$ and $\tau$ dimensions. We show that the models differ considerably in terms of which is best on each of the two dimensions -- Llama always ``winning'' on $\zeta$ while Qwen wins on $\tau$, while DeepSeek often lands in the middle on both. 
    
    \item We discuss the implications of these results for real-world use of LLMs in prioritization and triage. Our results demonstrate clearly that our taxonomy has significant practical implications for measuring consistency and reliability of ranking. Our main recommendation is to analyze models carefully using both measures rather than relying exclusively on one of $\zeta$ or $\tau$. 
\end{enumerate}

\section{Preliminaries}
\label{sec:prelim}

Having introduced intra-run consistency and inter-run variance as independent axes of reliability in Table~\ref{tab:ranking-methods}, we now develop the formal machinery needed to measure each one. For intra-run consistency, we need a measure that is computable from tournament data alone and that does not require external validation. The circular triad count and its normalization into $\zeta$ provide exactly this. For inter-run variance, we use a standard measure of distance between rankings -- Kendall's $\tau$ \cite{kendalltau}.

\subsection{Tournaments and Circular Triads}
\label{subsec:triads}
Given $n$ items, a pairwise comparison procedure applied to every pair produces a tournament: a complete graph in which every pair of items is connected by exactly one directed edge indicating the preferred item. We represent a tournament by its adjacency matrix $\bm{A} \in \{0,1\}^{n \times n}$, where $A_{ij} = 1$ if item $i$ is preferred to item $j$ and $A_{ij} = 0$ otherwise, with $A_{ij} + A_{ji} = 1$ for all $i \neq j$. The \emph{out-degree} of item $i$ is $s_i = \sum_{j \neq i} A_{ij}$, the number of items it is preferred to.

\begin{definition}[Circular Triad]
    \label{def:triad}
    A \emph{circular triad} in a tournament $\bm{A}$ is a triple $\{i, j, k\}$ such that $A_{ij} = A_{jk} = A_{ki} = 1$ or      $A_{ik} = A_{kj} = A_{ji} = 1$.
\end{definition}

\noindent Circular triads are the minimal unit of intransitivity: three pairwise preferences that admit no coherent ordering of the triple. These arise naturally in voting data, in expert judgment, and in any setting where preferences are not perfectly transitive. Their presence indicates that the pairwise preference structure is not fully consistent with any single total order, and the more of them a tournament contains, the less cleanly that tournament can be summarized by a ranking. The total circular triad count in a tournament as defined above admits a closed-form expression \citep{kendall1940method, alway1962distribution}:
    \begin{equation}
    \label{eq:Tn}
        T_n = \binom{n}{3} - \sum_{i=1}^{n} \binom{s_i}{2},
    \end{equation}
with maximum value
    \begin{equation}
    \label{eq:Tmax}
        T_{\max} = \frac{1}{24} \cdot 
        \begin{cases}
            n(n^2 - 1) & \text{if } n \text{ is odd}, \\[6pt]
            n(n^2 - 4) & \text{if } n \text{ is even}.
        \end{cases}
    \end{equation}

\begin{definition}[Coefficient of Consistency]
    \label{def:zeta}
    The coefficient of consistency \citep{kendall1940method, alway1962distribution} is
        \begin{equation}
        \label{eq:zeta}
            \zeta = 1 - \frac{T_n}{T_{\max}} \in [0, 1].
        \end{equation}
\end{definition}

\noindent A value of $\zeta = 1$ indicates a perfectly transitive tournament (zero circular triads), while $\zeta = 0$ indicates maximum cyclicity. The intuition for why $\zeta$ is a good measure of ranking consistency is straightforward: any aggregation method must resolve circular preferences, and the more such resolutions it must make, the more the output depends on the particular method of resolution rather than on a shared signal roughly consistent with a total order. 
 
Crucially, $\zeta$ can be computed from the tournament data $\bm{A}$ alone. This is precisely what a practitioner needs: after collecting pairwise comparisons and running an aggregator, they have a ranking but no way to assess its quality. Computing $\zeta$ from the same data that produced the ranking provides a diagnostic that requires no additional comparisons and no ``ground truth'' (in situations where that exists).

\paragraph{Kendall's $\tau$ as a Measure of Inter-Run Variance}

To measure the second axis of our taxonomy, inter-run variance, we need a way to quantify how much two rankings of the same $n$ items disagree. We use Kendall's $\tau$ distance \citep{kendalltau}, which counts the number of pairwise disagreements between two rankings. Given two rankings of the same $n$ items, a pair $(i,j)$ is concordant if both rankings place $i$ and $j$ in the same relative order, and discordant otherwise. The normalized Kendall $\tau$ is
    \begin{equation*}
    \label{eq:tau}
        \tau = \frac{(\text{concordant pairs}) - (\text{discordant pairs})}{\binom{n}{2}},
    \end{equation*}
which takes on values in $[-1, 1]$, with $\tau = 1$ indicating identical rankings, $\tau = -1$ indicating reversed rankings, and $\tau = 0$ indicating no more agreement than would be expected by chance. In the context of our work, given two independent runs of the same pairwise comparison and aggregation pipeline on the same set of items, $\tau$ measures inter-run agreement. A value near 1 indicates that the pipeline produces stable rankings regardless of which particular run is observed; a value near 0 indicates that the rankings are essentially unrelated across runs.

\subsection{Size Dependence of $\zeta$}
\label{subsec:zeta_size_dependence}

The definition of $\zeta$ in Equation~\eqref{eq:zeta} combines two quantities that behave differently as the number of items $n$ changes: the circular triad count $T_n$, which depends on the comparison data, and the normalization constant $T_{\max}$, which depends only on $n$. To understand how $\zeta$ evolves with $n$, it is useful to decompose it into a data-dependent term and a purely combinatorial term. This decomposition also has a practical benefit: it yields a natural estimator of $\zeta$ from incomplete tournament data, which we will need for our experiments.

\begin{definition}[Inconsistency Rate]
    \label{def:inconsistency_rate}
    The \emph{inconsistency rate} of a tournament $\bm{A}$ on $n$ items is
        \begin{equation}
        \label{eq:r}
            r = \frac{T_n}{\binom{n}{3}},
        \end{equation}
    the fraction of all triplets that form circular triads.
    \end{definition}
\noindent Rewriting Equation~\eqref{eq:zeta} in terms of $r$ gives
    \begin{equation}
    \label{eq:zeta_decomp}
        \zeta = 1 - r \cdot \frac{\binom{n}{3}}{T_{\max}}.
    \end{equation}
This separates $\zeta$ into two factors: the inconsistency rate $r$, which captures how intransitive the pairwise comparisons are, and the scaling factor $\binom{n}{3}/T_{\max}$, which is a known constant depending only on $n$. The inconsistency rate has a clean interpretation: it is the probability that a uniformly random triplet drawn from the tournament contains a circular triad. Substituting in Equation~\eqref{eq:Tmax} and simplifying yields the scaling factor in closed form:
    \begin{equation}
    \label{eq:scaling_factor}
        \frac{\binom{n}{3}}{T_{\max}} = 4 \cdot 
        \begin{cases}
            \dfrac{(n-2)}{n+1} & \text{if } n \text{ is odd}, \\[6pt]
            \dfrac{(n-1)}{n+2} & \text{if } n \text{ is even}.
            \end{cases}
    \end{equation}

\paragraph{Estimation of $\zeta$ in sparse tournaments: } The decomposition in Equation~\eqref{eq:zeta_decomp} has an immediate practical benefit. In many settings, not every pair of items can be compared: either the number of pairs is too large, or comparisons are expensive. When each pair $\{i,j\}$ is observed independently with probability $p$, a triplet $\{i,j,k\}$ is fully observed only if all three of its constituent pairs happen to be selected, which occurs with probability $p^3$. Because the sampling mechanism is independent of the comparison outcomes, the circular triad rate among fully observed triplets is an unbiased estimate of $r$ (and consequently of $zeta$). Let $\mathcal{O} = \bigl\{\{i,j,k\} : \text{all three pairs observed}\bigr\}$ denote the set of fully observed triplets, and let $T_{\text{obs}}$ be the number of circular triads among them. Then
    \begin{equation}
    \label{eq:zeta_hat}
        \hat{\zeta} = 1 - \frac{T_{\text{obs}}}{|\mathcal{O}|} \cdot \frac{\binom{n}{3}}{T_{\max}}
    \end{equation}
is an unbiased estimator of $\zeta$. The ratio $T_{\text{obs}}/|\mathcal{O}|$ estimates the inconsistency rate $r$, and the scaling factor $\binom{n}{3}/T_{\max}$ is the same constant as before. The precision of the estimator depends on $|\mathcal{O}|$, which grows as $\binom{n}{3} \cdot p^3$; even at moderate sparsity the number of fully observed triplets is typically large enough for a reliable estimate.

The decomposition in Equation~\eqref{eq:zeta_decomp} also makes it clear how $\zeta$ depends on the number of items $n$. The scaling factor $\binom{n}{3}/T_{\max}$ is an increasing function of $n$ that converges to a finite limit, so the behavior of $\zeta$ is governed by how the inconsistency rate $r$ changes with $n$.

\begin{proposition}[Asymptotic behavior of the scaling factor]
\label{prop:scaling}
    Both cases of Equation~\eqref{eq:scaling_factor} converge monotonically to $4$ from below as $n \to \infty$.
\end{proposition}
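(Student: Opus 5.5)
The plan is to treat the two parity cases of Equation~\eqref{eq:scaling_factor} separately. In each case we write the scaling factor as $4$ minus a positive term that is strictly decreasing in $n$ and tends to $0$. Before doing so, we would verify the closed form by substituting Equation~\eqref{eq:Tmax} into $\binom{n}{3}/T_{\max}$. For odd $n$:
\[
\frac{n(n-1)(n-2)/6}{n(n^2-1)/24} = \frac{4(n-2)}{n+1}.
\]
For even $n$:
\[
\frac{n(n-1)(n-2)/6}{n(n^2-4)/24} = \frac{4(n-1)}{n+2}.
\]
The factors $n$, $(n-1)$ and $(n-2)$ cancel against $n^2-1=(n-1)(n+1)$ and $n^2-4=(n-2)(n+2)$ respectively. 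This cancellation requires $n\ge 3$ for both $T_{\max}$ and $\binom{n}{3}$ to be nonzero, so we restrict to $n\ge 3$ throughout.

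The key step is the partial-fraction rewriting
\[
\frac{n-2}{n+1} = 1 - \frac{3}{n+1}, \qquad \frac{n-1}{n+2} = 1 - \frac{3}{n+2}.
\]
This gives the scaling factor as $4 - \frac{12}{n+1}$ in the odd case and $4 - \frac{12}{n+2}$ in the even case.

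From this form each claim is immediate. Since $n+1>0$ and $n+2>0$, the subtracted term is strictly positive, so every value lies strictly below $4$. The subtracted term is strictly decreasing in $n$, so the sequence is strictly increasing. Monotonicity within each parity class follows either from these formulas restricted to odd or even $n$, or from the real-variable functions $x\mapsto 4-12/(x+1)$ and $x\mapsto 4-12/(x+2)$, which have positive derivatives. Finally, both subtracted terms are $O(1/n)$ and tend to $0$, so the limit is $4$.

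There is no real obstacle; the only point requiring care is the interpretation of ``monotonically.'' Each parity subsequence is monotone, as the statement asserts. If one instead reads the claim as monotonicity of the interleaved sequence over all $n$, it also holds, because for consecutive integers $m$ (odd) and $m+1$ (even) we have
\[
4-\tfrac{12}{m+1} < 4-\tfrac{12}{m+3}, \qquad 4-\tfrac{12}{(m+1)+2} < 4-\tfrac{12}{(m+2)+1}.
\]
The first inequality compares odd $m$ with even $m+1$. The second compares even $m+1$ with odd $m+2$, and it reduces to the comparison $m+3 < m+3$, which is false: the two values are equal. So across all $n$ the scaling factor is only non-decreasing, with ties between each even $n$ and the following odd $n+1$. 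I would state this explicitly as a remark, since it confirms that the per-case monotonicity in the proposition is the correct formulation.
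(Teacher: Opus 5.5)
Your proof is correct and follows essentially the paper's route, with a different rewriting that proves more. The paper writes the odd case as $4(1-\tfrac{2}{n})/(1+\tfrac{1}{n})$ and the even case as $4(1-\tfrac{1}{n})/(1+\tfrac{2}{n})$. It reads off the limit from these forms and gets the strict bound from $n-2<n+1$. It never argues monotonicity, even though the proposition asserts it; monotonicity does follow because the numerator increases and the denominator decreases.

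Your partial-fraction form, $4-\tfrac{12}{n+1}$ for odd $n$ and $4-\tfrac{12}{n+2}$ for even $n$, gives all three properties from one expression: the strict upper bound, strict increase within each parity class, and the limit. You also verify the closed form~\eqref{eq:scaling_factor}, which the paper only asserts.

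Your cross-parity remark adds something real. The paper later calls $\binom{n}{3}/T_{\max}$ ``increasing in $n$'' and says $\zeta$ ``decreases monotonically'' in the constant-$r$ regime. Your computation shows this holds only weakly over all $n$, for example $f(4)=f(5)=2$ and $f(6)=f(7)=5/2$. Your two formulas in fact merge into $f(n)=4-\frac{6}{\lfloor n/2\rfloor+1}$ for all $n\ge 3$, which makes both the ties and the weak monotonicity immediate.

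One fix to the last display. You write the even-to-odd comparison as a strict inequality and then say it is false. State it as an equality from the start, and change ``it also holds'' to ``it holds only in the weak (non-decreasing) sense.''
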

\begin{proof}
    The odd case can be written as $4 \cdot (1 - \tfrac{2}{n})/(1 + \tfrac{1}{n})$; the $n$ terms vanish as $n \to \infty$ and the ratio approaches $4$. It is strictly less than $4$ for all finite $n$ because $n - 2 < n + 1$. The even case is analogous: $4 \cdot (1 - \tfrac{1}{n})/(1 + \tfrac{2}{n}) \to 4$.
\end{proof}
\noindent Since the scaling factor increases toward $4$, Equation~\eqref{eq:zeta_decomp} shows that the trajectory of $\zeta$ as $n$ grows is entirely determined by the trajectory of $r$. Three regimes are possible:

\paragraph{Constant $r$:} If the inconsistency rate does not depend on $n$, then $\zeta(n) = 1 - r \cdot f(n)$ where $f(n) = \binom{n}{3}/T_{\max}$ is increasing in $n$. In this case $\zeta$ decreases monotonically toward the asymptote:
    \begin{equation}
    \label{eq:zeta_limit}
        \lim_{n \to \infty} \zeta(n, r) = 1 - 4r.
    \end{equation}
Note that $\zeta$ cannot become negative: since $T_n \leq T_{\max}$, the inconsistency rate is bounded by $r \leq 1/f(n)$, which converges to $1/4$, so $r \cdot f(n) \leq 1$ and $\zeta \in [0,1]$ always. \textbf{Increasing $r$:} if $r$ grows with $n$, both factors push $\zeta$ downward; this would happen if larger comparison sets introduced systematically harder distinctions.\textbf{Decreasing $r$:} If $r$ decreases with $n$, the two forces oppose. Whether $\zeta$ falls, stays flat, or rises depends on whether $r$ decreases  slower than, at the same rate as, or faster than $f(n)$ grows.

\noindent This three-way characterization frames a central empirical question of the paper. The behavior of $\zeta$ as a function of $n$ is not determined by the combinatorics of tournaments alone; it depends on the inconsistency rate, which is a property of whatever process generates the pairwise comparisons. Different comparison processes (different LLMs, different noise models, different domains) may place the system in different regimes. In Section~\ref{sec:synthetic}, we instantiate this framework with the Bradley-Terry-Luce model, where $r$ can be derived analytically as a function of the preference strength parameter $\beta$. In Section~\ref{sec:evaluating-llms}, we measure $r$ across several LLMs and datasets and identify each combination's regime.
\section{A Synthetic Model}
\label{sec:synthetic}

Whatever the aggregator -- Rank Centrality (RC) \citep{negahban2017rank}, Bradley-Terry maximum likelihood (BT-MLE) \citep{hunter2004mm}, or simple win-rate aggregation -- a practitioner ends up with a ranking in hand and no way to assess its reliability. Theoretical guarantees, where they exist, are stated in terms of population-level quantities such as the spectral gap of a transition matrix \citep{negahban2017rank,shah2017stochastically}, which are unavailable from observed data. This section establishes the relationship between $\zeta$ and $\tau$ under controlled conditions, using a standard generative model with a known ground-truth ranking so that noise can be varied systematically.

\subsection{Generative Model}
\label{subsec:model}

We require a generative model for pairwise comparisons with three properties: (1) a known true ranking against which the aggregator's output can be measured, (2) a single tunable parameter that controls comparison noise, and (3) rank-gap-dependent difficulty, so that close-in-rank items are harder to distinguish than distant ones. The Bradley-Terry-Luce (BTL) model \citep{bradley1952rank, luce1959individual} satisfies the first and third requirements by construction: each item $i \in [n]$ has a latent score $w_i > 0$, and the probability that item $i$ defeats item $j$ is $P(i \succ j) = w_i / (w_i + w_j) = \sigma(\theta_i - \theta_j)$, where $\theta_i = \log w_i$ and $\sigma$ is the logistic function. To satisfy the second requirement, we impose a structure on the score vector that collapses the $n - 1$ free parameters into a single dimension.

\begin{definition}[Equally-Spaced BTL Model]
    \label{def:esbtl}
    Fix $n$ items indexed $1, \ldots, n$ in true ranking order. Assign log-scores $\theta_i = \beta(n - i)$ for $i = 1, \ldots, n$, where $\beta \geq 0$ is the \emph{per-position discriminability}. The pairwise comparison probability is
        \begin{equation}
        \label{eq:pij}
            P(i \succ j) = \sigma\bigl(\beta(j - i)\bigr) = \frac{1}{1 + \exp\bigl(-\beta(j - i)\bigr)}.
        \end{equation}
\end{definition}

\noindent $\beta$ governs the entire noise structure. At $\beta = 0$, all comparisons are fair coin flips; as $\beta \to \infty$, all comparisons become deterministic. The probability that two adjacent items are ranked differently than in the true ranking is $\varepsilon(\beta) = 1/(1 + e^{\beta})$, which decreases monotonically in $\beta$ and represents the hardest-case pair-level error rate. This model has direct precedent in the original RC paper. In \citet{negahban2017rank}, the authors generate synthetic BTL data with geometric scores $w_i = b^{(2i - 1 - n)/2n}$, yielding a constant log-score gap of $(\log b)/n$ between consecutive items. This is the same one-parameter family as Definition~\ref{def:esbtl} under the identification $\beta = (\log b)/n$. We prefer the $\beta$ parameterization because it is directly interpretable: two experiments with the same $\beta$ but different $n$ share the same adjacent error rate $\varepsilon(\beta)$ and local comparison difficulty.

\begin{figure}[t]
    \centering
    \includegraphics[width=0.7\columnwidth]{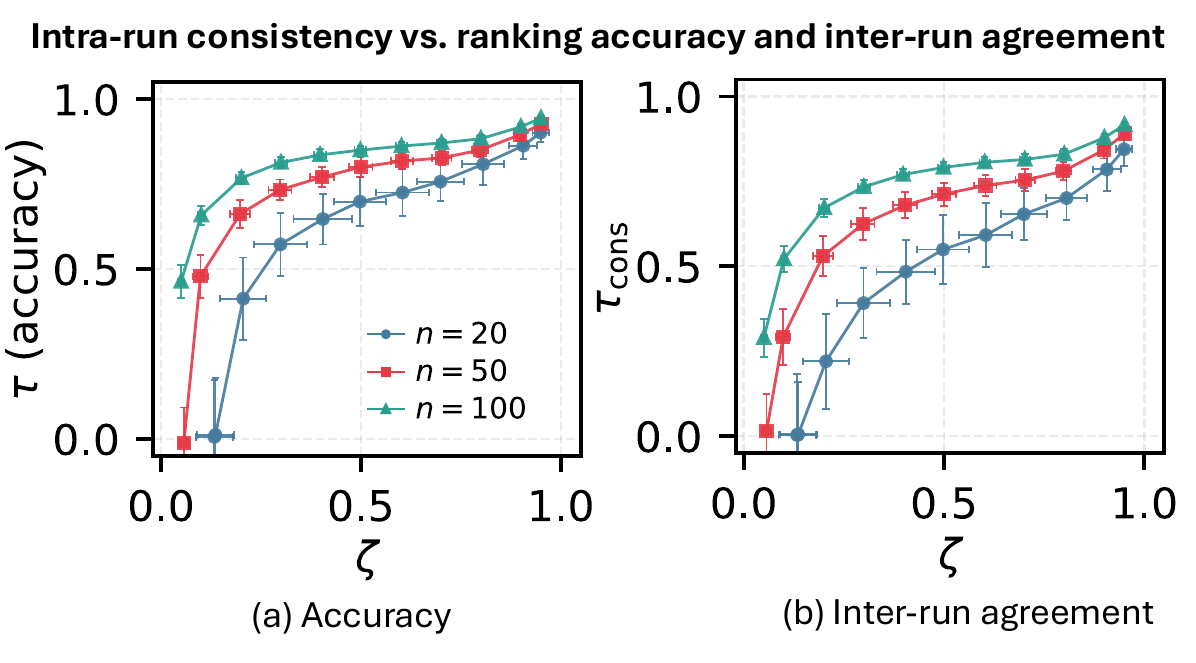}
    \caption{Intra-run consistency ($\zeta$) vs.\ the two facets of ranking quality under the synthetic BTL model, for $n \in \{20, 50, 100\}$. \textbf{(a)}~$\zeta$ vs.\ ranking accuracy (Kendall $\tau$ against the known ground-truth ordering). The relationship is monotonically increasing and largely converges across tournament sizes at high $\zeta$, with greater separation at low consistency where smaller tournaments provide less information per item. \textbf{(b)}~$\zeta$ vs.\ inter-run agreement ($\tau_{\text{cons}}$, Kendall $\tau$ between two independent RC rankings from the same generative model). The curves separate by $n$: at a given $\zeta$, larger tournaments yield higher inter-run agreement. This separation shows that intra-run consistency alone does not determine inter-run variance; the size of the comparison matrix also matters. Error bars represent $\pm 1$ standard deviation across 100 trials.}
    \label{fig:main-calibration}
\end{figure}

\subsection{Experimental Design}
\label{sec:synthetic_design}

The mathematical analysis of Section~\ref{subsec:zeta_size_dependence} showed that the behavior of $\zeta$ with $n$ is governed by the inconsistency rate $r$, which under the model defined in Definition~\ref{def:esbtl} is a function of both $\beta$ and $n$. To disentangle the roles of these two parameters, we organize the synthetic experiments along two axes.

\subsubsection{Results: Complete Tournaments}
\label{sec:synthetic_results_complete}
In the first study, we fix a grid of target $\zeta$ values and, for each pair $(\zeta^*, n)$ with $n \in \{20, 50, 100\}$, solve for the value of $\beta$ that produces expected consistency $\zeta^*$ at tournament size $n$. This inverts the natural direction of the model: rather than fixing the noise level and observing how consistent the resulting tournament is, we fix the consistency level and ask what noise level is required to achieve it. For each $(\zeta^*, n)$ pair, we generate 100 independent complete tournaments from the BTL model at the corresponding $\beta$, aggregate each via Rank Centrality (we show in Figure~\ref{fig:aggregator_comparison} in Appendix~\ref{appen:aggregator_comparison} 
that these experiments are robust to the specific choice of rank aggregator used), and measure three quantities: the observed $\zeta$ of each tournament, the accuracy of the aggregated ranking (Kendall $\tau$ against the known ground-truth ordering), and the inter-run agreement (Kendall $\tau$ between rankings produced by independent runs of the same generative process, which we denote $\tau_{\text{cons}}$).

\begin{figure}[t]
    \centering
    \includegraphics[width=0.7\columnwidth]{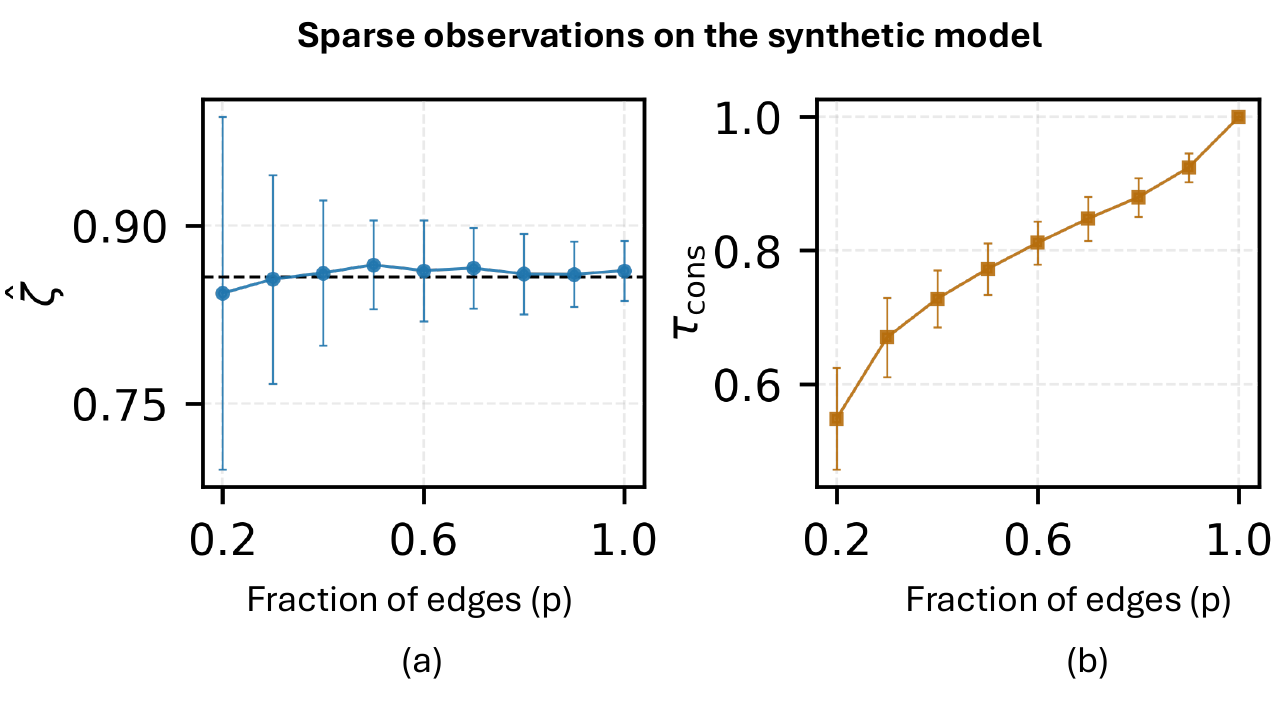}
    \caption{$n = 30$, $\beta = 0.3$, full-tournament $\zeta \approx 0.859$, marked by the dashed line. \textbf{(Left)}~The sparse estimator $\hat{\zeta}$ is unbiased across the full range of $p$; variance increases at low $p$ but the mean remains stable. \textbf{(Right)}~Inter-run agreement $\tau_{\text{cons}}$ increases monotonically with the observation fraction, from roughly $0.55$ at $p = 0.2$ to perfect agreement at $p = 1.0$. Each point averages 100 independent subsampled tournaments. The contrast illustrates that $\zeta$ measures the consistency of the comparison signal while $\tau_{\text{cons}}$ reflects both consistency and quantity.}
    \label{fig:sparse-observation}
\end{figure}

Figure 1a (left panel) shows that $\zeta$ is a strong predictor of ranking accuracy. The relationship is monotonically increasing across all three tournament sizes: higher intra-run consistency corresponds to higher agreement with the ground-truth ordering. With increasing $n$, at the same level of consistency $\zeta$, we observe a higher value of $\tau_{\mathrm{acc}}$. At high $\zeta$ ($> 0.8$), the three curves converge, indicating that when comparisons are sufficiently clean, even a modest number of items yields accurate rankings. Figure 1b (right panel) shows the relationship between $\zeta$ and inter-run agreement $\tau_{\mathrm{cons}}$. Here the curves separate similarly by $n$: at any given level of intra-run consistency, larger tournaments produce substantially more stable rankings across independent runs. At $\zeta = 0.5$, for instance, $\tau_{\text{cons}}$ ranges from roughly $0.55$ at $n = 20$ to roughly $0.79$ at $n = 100$. The explanation is that a complete tournament on $n$ items contains $\binom{n}{2}$ comparisons, so each item participates in $n - 1$ comparisons. Increasing $n$ increases the amount of data per item available to the aggregator, reducing estimation variance in the ranking even when the level of intransitivity (as captured by $\zeta$) is unchanged.
 
These results show that, in this synthetic setting, intra-run consistency and inter-run variance are correlated, but they are not redundant: two tournaments with identical $\zeta$ can have very different levels of ranking stability depending on the size of the comparison matrix.

\subsubsection{Results: Sparse Observation}
\label{sec:synthetic_results_sparse}

In the second study, we fix $n = 30$ and $\beta = 0.3$ (corresponding to $\zeta \approx 0.859$ in the full tournament) and vary the fraction of pairwise comparisons observed. For each observation probability $p \in \{0.2, 0.3, \ldots, 1.0\}$, we subsample edges uniformly at random, aggregate only the observed comparisons via RC, and measure both $\hat{\zeta}$ (using the sparse estimator from Equation~\eqref{eq:zeta_hat}) and $\tau_{\text{cons}}$ (between rankings from independent subsampled runs). This isolates the effect of comparison budget on inter-run variance while holding the underlying noise process constant.

Figure 2a confirms that the sparse estimator $\hat{\zeta}$ is unbiased across the full range of observation probabilities. At every value of $p$ from $0.2$ to $1.0$, the mean of $\hat{\zeta}$ falls on or near the dashed line marking the full-tournament value $\zeta = 0.859$. The variance of the estimator increases at low $p$, as expected given that fewer fully observed triplets are available, but there is no systematic bias. Figure 2b, on the other hand, tells a different story for inter-run agreement. $\tau_{\text{cons}}$ increases steadily from roughly $0.55$ at $p = 0.2$ to $1.0$ at $p = 1.0$. When only $20\%$ of pairs are observed, two independent subsampled runs see largely different comparisons, and their aggregated rankings diverge substantially. As the observation fraction increases, the runs share more data and the rankings converge. Taken together, Figure~\ref{fig:sparse-observation} provides a clean illustration of the distinction between our two axes. The underlying comparison process is identical across all values of $p$: the same $\beta$, the same items, the same pairwise noise. The sparse estimator correctly reports this, returning the same $\hat{\zeta}$ regardless of how many comparisons are observed. But inter-run variance depends not only on the consistency of the comparisons but on the quantity. A practitioner who observes low $\tau_{\text{cons}}$ but high $\hat{\zeta}$ can diagnose the problem as insufficient comparison budget rather than unreliable judgments, and the remedy is to elicit more comparisons rather than to distrust the comparisons themselves.

\section{Evaluating LLMs}
\label{sec:evaluating-llms}

Section~\ref{sec:synthetic} established that the two axes of our taxonomy are correlated but not redundant even under controlled conditions: $\zeta$ captures comparison consistency independently of how many comparisons are collected, while inter-run variation also depends on the comparison budget and aggregation method. The question is whether real-world pairwise comparisons generated by LLMs respect this relationship, or depart from it in informative ways.

\begin{figure}[ht]
    \centering
    \includegraphics[width=0.5\columnwidth]{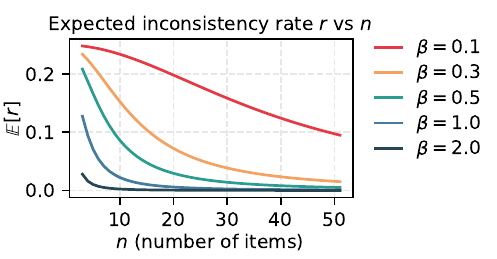}
    \caption{Expected inconsistency rate $E[r]$ as a function of $n$ for several values of $\beta$ in the equally-spaced BTL model. For every fixed $\beta > 0$, the inconsistency rate decreases with $n$, confirming Proposition~\ref{prop:zeta_scaling}.}
    \label{fig:expected-r-vs-n}
\end{figure}

\begin{restatable}{proposition}{zetascaling}
\label{prop:zeta_scaling}
    Fix $\beta > 0$. In the equally-spaced BTL model as defined in Definition~\ref{def:esbtl}, the expected number of circular triads in a complete tournament satisfies $E[T_n] = \Theta(n)$, and $T_{\max} = \Theta(n^3)$. Hence: 
    \begin{equation*}
        E[1 - \zeta] = \frac{E[T_n]}{T_{\max}} = \Theta \left(\frac{1}{n^2}\right).
    \end{equation*}
\end{restatable}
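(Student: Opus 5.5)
By linearity of expectation, $E[T_n]$ is a sum over all triples $\{i,j,k\}$ of the probability that the triple forms a circular triad. The plan is to bound this sum above and below by constants times $n$. The statement $T_{\max} = \Theta(n^3)$ is immediate from Equation~\eqref{eq:Tmax}. The ratio then follows because $T_{\max}$ is deterministic, so $E[1-\zeta] = E[T_n]/T_{\max}$.

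For a triple $i<j<k$ with gaps $a = j-i \ge 1$ and $b = k-j \ge 1$, the three comparisons are independent. Using Equation~\eqref{eq:pij}, let $q(d) = \sigma(-\beta d) = 1/(1+e^{\beta d})$ be the probability that the lower-ranked item wins a pair at distance $d$. The triple is circular in exactly two ways. Either $i\succ j$, $j\succ k$ and $k\succ i$, or the reverse cycle holds. Hence
\[
  P_{a,b} = (1-q(a))(1-q(b))\,q(a+b) + q(a)\,q(b)\,(1-q(a+b)).
\]
Both terms are at most $q(a+b) + q(a)q(b) \le 2e^{-\beta(a+b)}$, using $q(d)\le e^{-\beta d}$. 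The key point is that $P_{a,b}$ depends only on the gaps and decays geometrically in $a+b$.

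\textbf{Upper bound.} Group triples by their smallest index $i$. For each $i$, the number of triples with given gaps $(a,b)$ is at most one. Summing over all gaps gives
\[
  \sum_{a,b\ge1} 2e^{-\beta(a+b)} = \frac{2}{(e^{\beta}-1)^2} =: C_\beta < \infty,
\]
which is a constant depending only on $\beta$. Summing over the $n$ choices of $i$ gives $E[T_n] \le C_\beta\, n$.

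\textbf{Lower bound.} Restrict to consecutive triples $(i,i+1,i+2)$, of which there are $n-2$. Each contributes $P_{1,1} \ge q(1)^2(1-q(2)) > 0$, a positive constant for fixed $\beta$. Hence $E[T_n] \ge c_\beta (n-2)$, so $E[T_n]=\Theta(n)$.

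Combining with $T_{\max}\sim n^3/24$ gives $E[1-\zeta]=\Theta(1/n^2)$. The only step needing care is the upper bound, since it requires the summability of $P_{a,b}$ over the gaps. This reduces to the elementary inequality $q(d)\le e^{-\beta d}$ together with a geometric series, and it is exactly where $\beta>0$ is essential. At $\beta=0$, every $P_{a,b}=1/4$ and $E[T_n]=\Theta(n^3)$.
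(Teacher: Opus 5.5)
Your proof is correct and follows the paper's argument essentially step for step. You use the same two-configuration formula for the cycle probability and the same bound $P_{a,b}\le 2e^{-\beta(a+b)}$ summed as a geometric series; grouping triples by smallest index is equivalent to the paper's count of $n-a-b\le n$ triples per gap pattern, and your lower bound from the $n-2$ consecutive triples via $q(1)^2(1-q(2))$ is exactly the paper's.
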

 
\begin{proof}[Proof sketch]
For a triple $i < j < k$ with rank gaps $a = j - i$ and $b = k - j$, the upset probability at gap $d$ is $q_d = \sigma(-\beta d)$, which decays exponentially in $d$. A circular triad requires either one long-edge upset or two short-edge upsets, so the cycle probability for a given triple is dominated by $e^{-\beta(a+b)}$. Summing over all triples,  the contribution from close neighbors (small rank gaps) grow linearly in $n$, while contributions from distant items are exponentially suppressed. This gives $E[T_n] = \Theta(n)$. Since $T_{\max} = \Theta(n^3)$, the ratio vanishes as $\Theta(1/n^2)$. The full proof is in Appendix~\ref{appen:zeta_scaling_proof}. 
\end{proof}

\begin{figure*}[!ht]
    \centering
    \setlength{\tabcolsep}{2pt}
    \renewcommand{\arraystretch}{0.8}
    \begin{tabular}{cccc}
        \multicolumn{4}{c}{\includegraphics[width=0.5\textwidth]{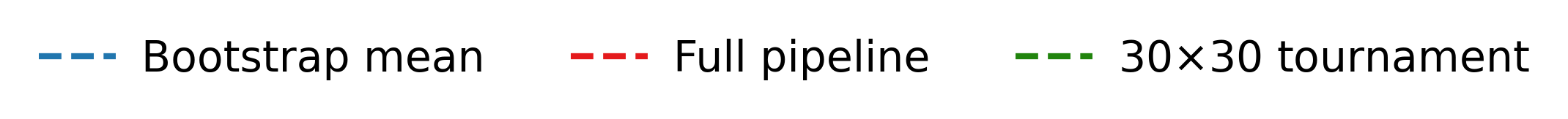}} \\[4pt]
        & \textbf{VI-SPDAT} & \textbf{VI-F-SPDAT} & \textbf{TAY-VI-SPDAT} \\[4pt]
        \rotatebox{90}{\hspace{12pt}\textbf{DeepSeek}} &
        \includegraphics[width=0.31\textwidth]{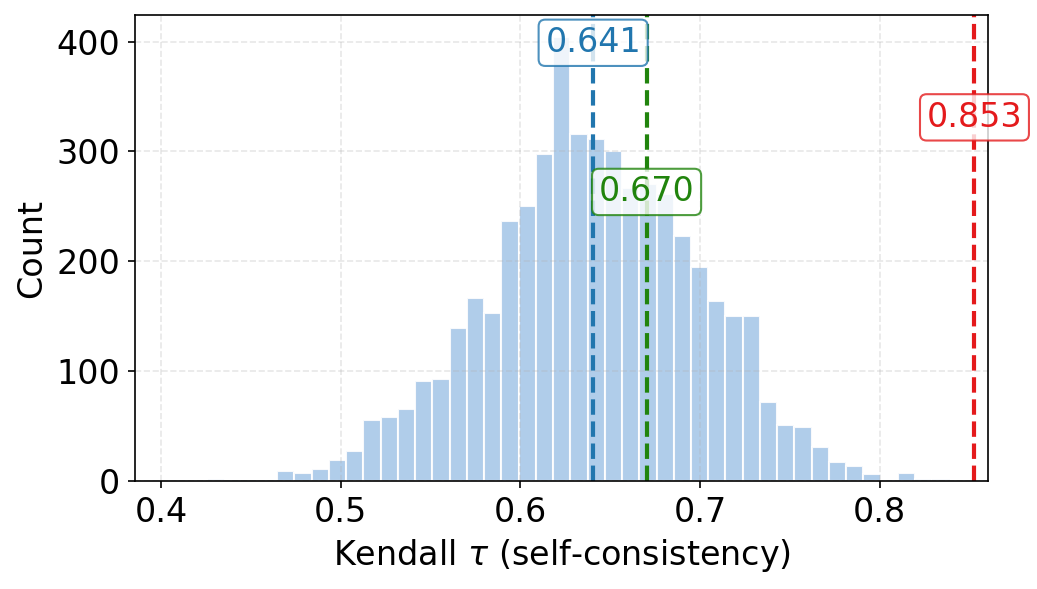} &
        \includegraphics[width=0.31\textwidth]{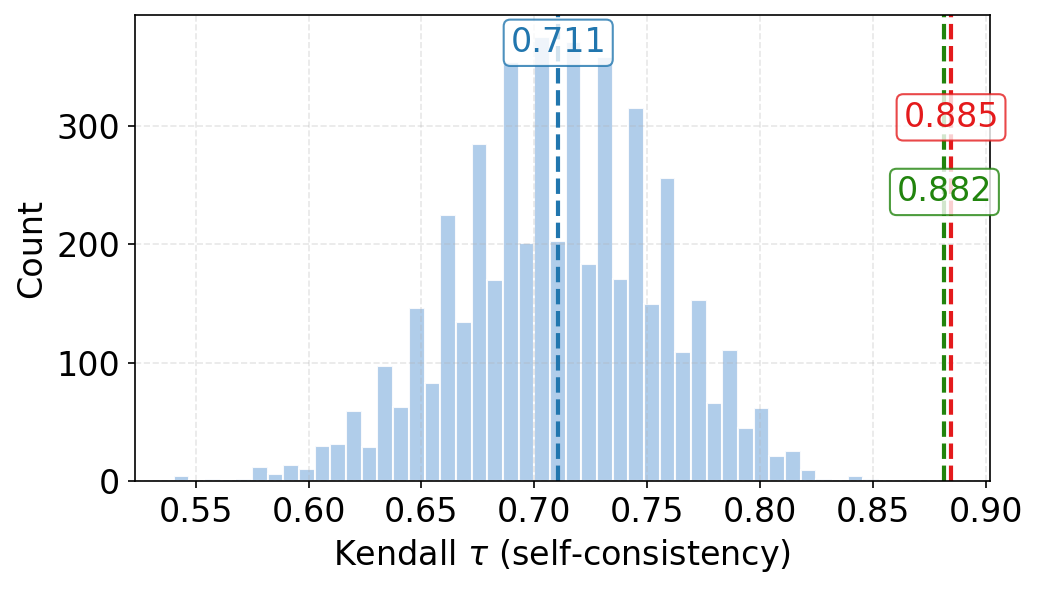} &
        \includegraphics[width=0.31\textwidth]{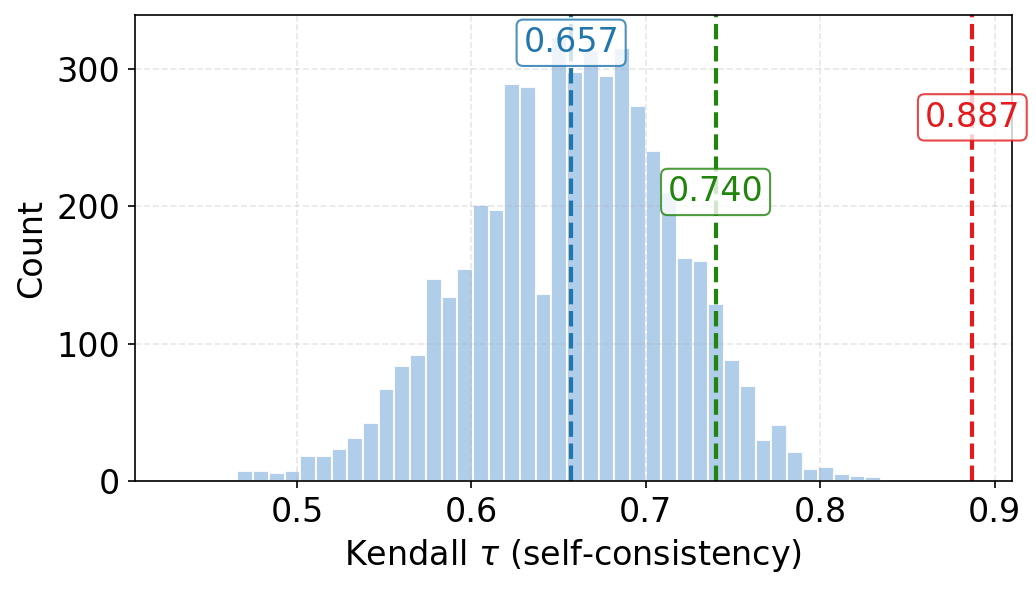} \\[2pt]
        \rotatebox{90}{\hspace{16pt}\textbf{LLaMA}} &
        \includegraphics[width=0.31\textwidth]{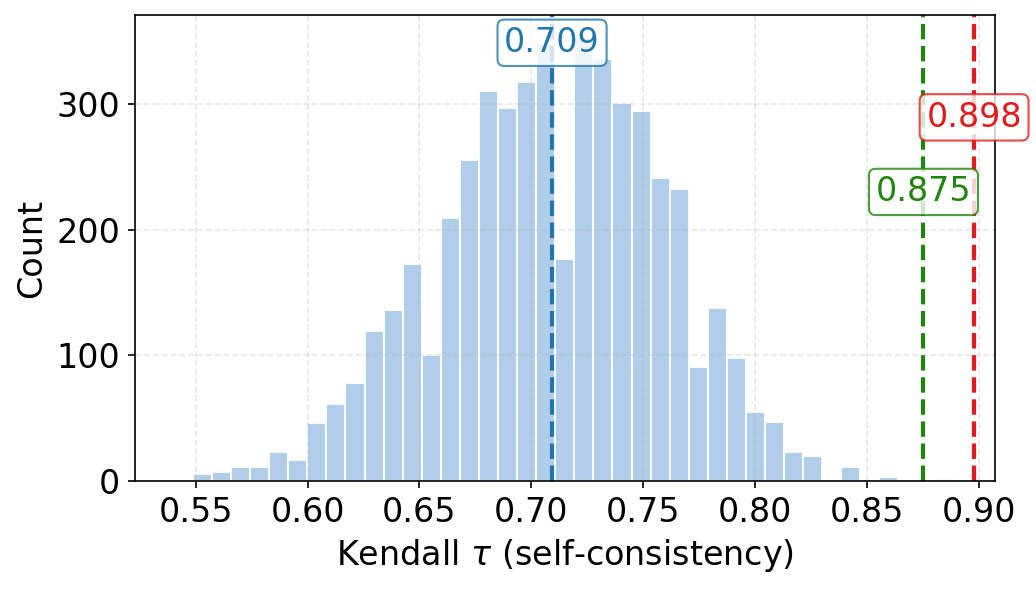} &
        \includegraphics[width=0.31\textwidth]{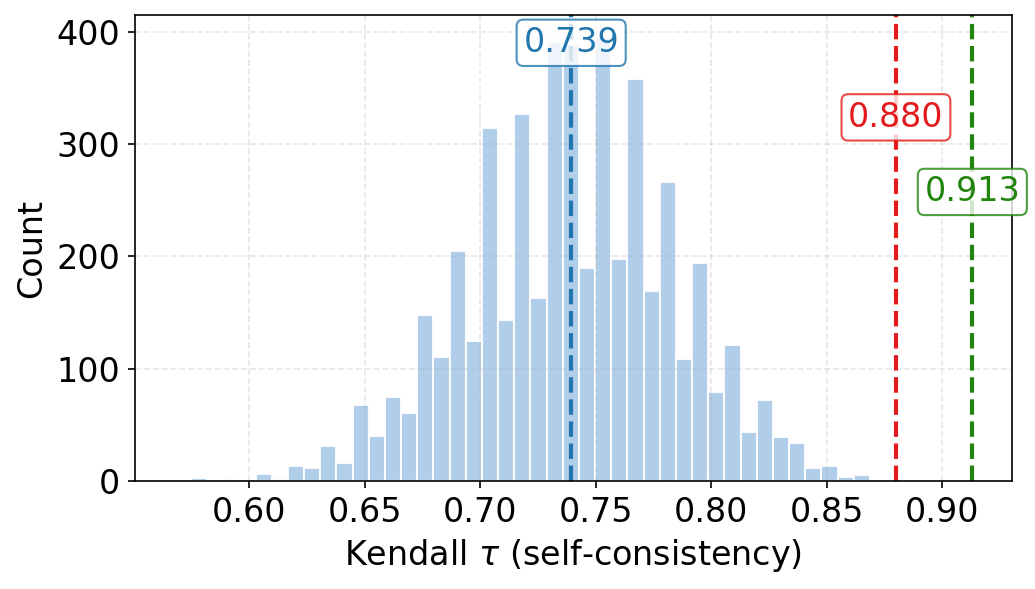} &
        \includegraphics[width=0.31\textwidth]{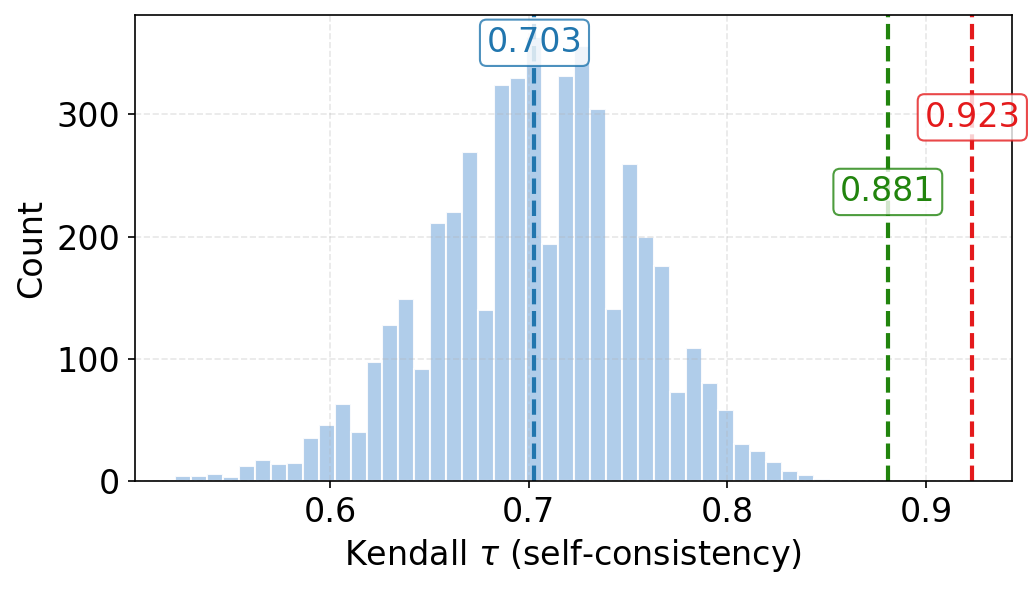} \\[2pt]
        \rotatebox{90}{\hspace{16pt}\textbf{Qwen}} &
        \includegraphics[width=0.31\textwidth]{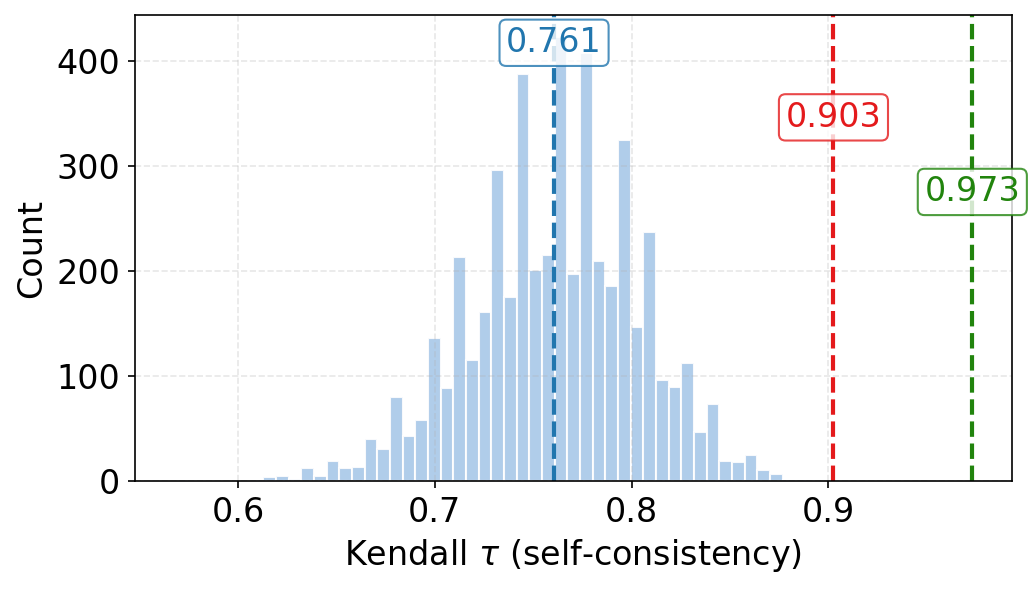} &
        \includegraphics[width=0.31\textwidth]{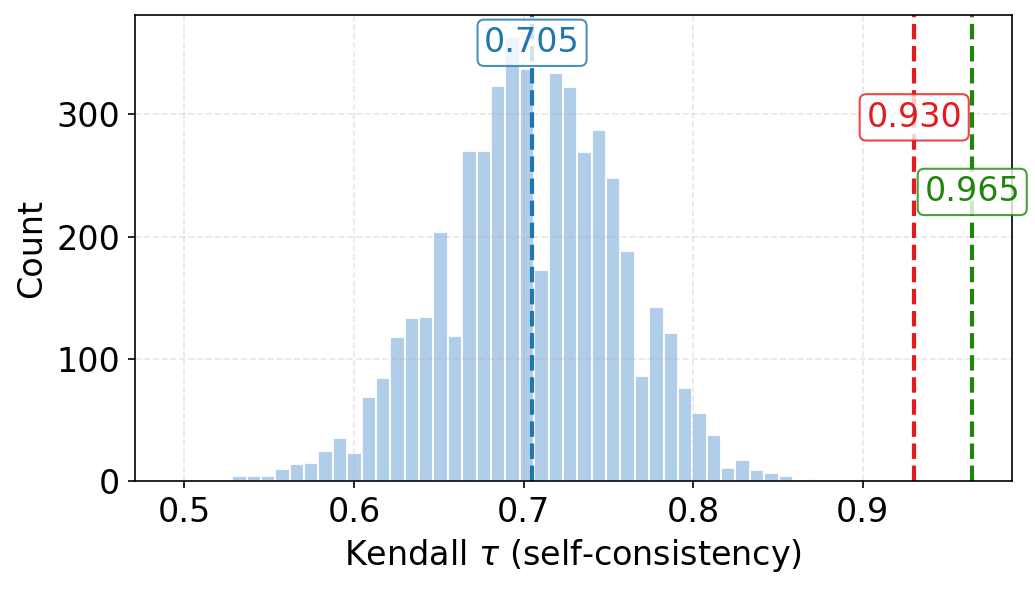} &
        \includegraphics[width=0.31\textwidth]{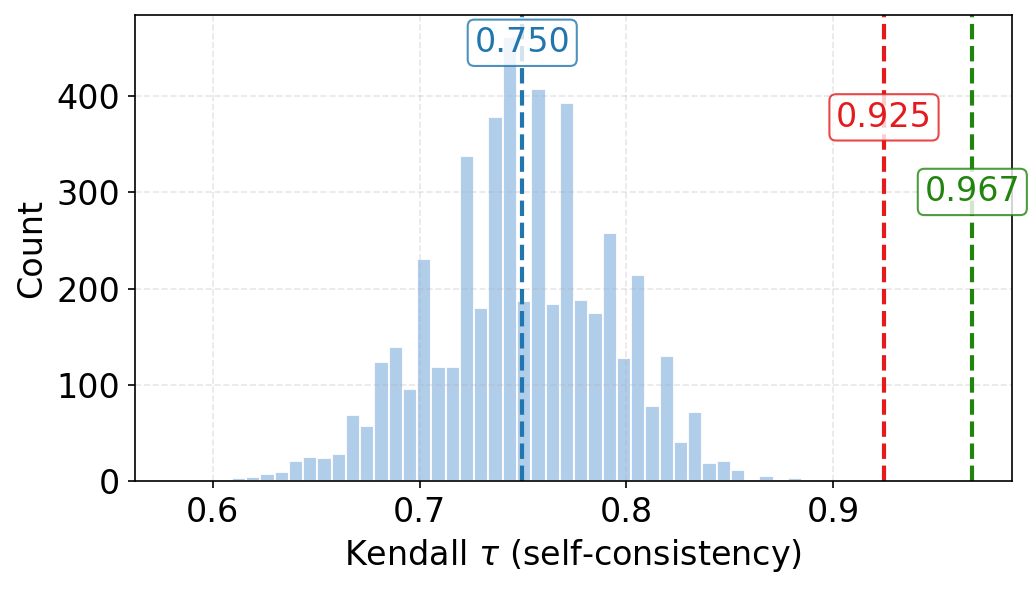} \\
    \end{tabular}
    \caption{Each panel shows the empirical distribution of $\tau_{\mathrm{cons}}$ from 100 bootstrap subsamples of a single $30 \times 30$ complete subtournament (at $p = 0.4$), with the full-pipeline and complete 30x30 tournament $\tau_{\mathrm{cons}}$ overlaid (see legend). Rows correspond to LLM judges (DeepSeek, LLaMA, Qwen); columns correspond to homelessness datasets (VI-SPDAT, VI-F-SPDAT, TAY-VI-SPDAT). In every case, the full-pipeline $\tau_{\mathrm{cons}}$ falls well to the right of the bootstrap distribution, illustrating the distinction between our two reliability axes: the subtournament correctly estimates comparison quality ($\zeta$, see Figure~\ref{fig:ratio-zeta-tau-alpha-sweep}), but inter-run agreement ($\tau_{\mathrm{cons}}$, this Figure) depends on both the quality and the quantity of comparisons collected.}
    \label{fig:bootstrap-grid}
\end{figure*}

\noindent Figure~\ref{fig:expected-r-vs-n} confirms Proposition~\ref{prop:zeta_scaling} visually. For every fixed $\beta$, the inconsistency rate $E[r]$ decreases with $n$ and approaches zero. The intuition is that in the equally-spaced BTL model, the vast majority of triples involve items that are far apart in rank, and those comparisons are nearly deterministic: the upset probability $q_d = \sigma(-\beta d)$ decays exponentially with rank gap $d$. Circular triads then arise almost exclusively from close neighbors, and the number of such close triples grows only linearly in $n$, while the total number of triples grows cubically. So $r$ shrinks as $\Theta(1/n^2)$, driving $\zeta$ toward $1$.

In the language of Section~\ref{subsec:zeta_size_dependence}, this places the BTL model firmly in the decreasing-$r$ regime, and the decrease is fast enough that $\zeta$ not only remains stable but actually increases. However, we note that the scaling of $\zeta$ with $n$ in practice may differ substantially from the stylized model.  If LLM pairwise comparisons behaved like this BTL process with some fixed preference strength, we would expect $\zeta$ to be very high in the tournaments we study ($n$ in the hundreds). As we will soon see, this is not what we observe. The $\zeta$ values in our real-world experiments remain roughly constant across different values of $n$, suggesting that the inconsistency rate $r$ is approximately constant rather than decreasing.  Further, in real prioritization tasks, asking an expensive method like an LLM to perform the 
full $\binom{n}{2}$ comparisons may be infeasible. We now evaluate the relationship between $\zeta$ and the inter-run variability $\tau_{\mathrm{cons}}$.

\subsubsection*{Datasets}
\label{sec:datasets}

We evaluate the diagnostic framework on real-world prioritization tasks spanning two domains: homelessness housing allocation and emergency department triage. 

\noindent \emph{Homelessness housing allocation.} We use household-level vulnerability assessments from St.\ Louis coordinated entry records (2021-2024) \cite{stereetLevelAI}. Three population-specific instruments are treated as separate datasets: VI-SPDAT for single adults ($n = 325$), VI-F-SPDAT for families ($n = 698$), and TAY-VI-SPDAT for transition-age youth ($n = 561$). Each household has a total acuity score used in practice for prioritization and is assigned to a priority band (low, medium, high) based on established scoring rubrics. For the Stage~1 subtournament, we select $m = 30$ households per instrument with proportional representation across priority bands.
 
\noindent \emph{Emergency department triage.} We use ED visits from MIMIC-IV \citep{mimic_iv} ($N = 180{,}057$ total encounters), restricted to encounters with complete triage features and valid acuity scores. Each visit is characterized by vital signs, chief complaint, medication history, and demographics recorded at arrival. The ground-truth label is the Emergency Severity Index (ESI), a five-level acuity score assigned by the triage nurse \citep{wolf2023esi}. For Stage~1, we select $m = 30$ patients with proportional representation across ESI levels, matching the distribution in the full dataset. For Stage~2, we sample $n = 500$ again matching proportional representation across ESI levels from the eligible pool. Full inclusion criteria and feature definitions are in Appendix~\ref{app:datasets}. 

\subsubsection*{LLMs, Prompting and Implementation Details}

We use three LLMs for our experiments: \textit{Meta-Llama-3-8B-Instruct}, \textit{DeepSeek-R1-Distill-Llama-8B}, and \textit{Qwen2.5-7B-Instruct}. Due to the sensitive nature of the data, we deploy these open source models locally with weights downloaded from Hugging-face. The prompts were iteratively designed to try and achieve high consistency, and are provided in Appendix \ref{app:llmprompts}.
We ran these models on one NVIDIA A100 GPU with 80 GB of memory and sampled outputs with 0.1 temperature. 

\begin{figure*}[h!]
    \centering

    \includegraphics[width=0.98\textwidth, trim=3in 6in 4.5in 17in,
        clip]{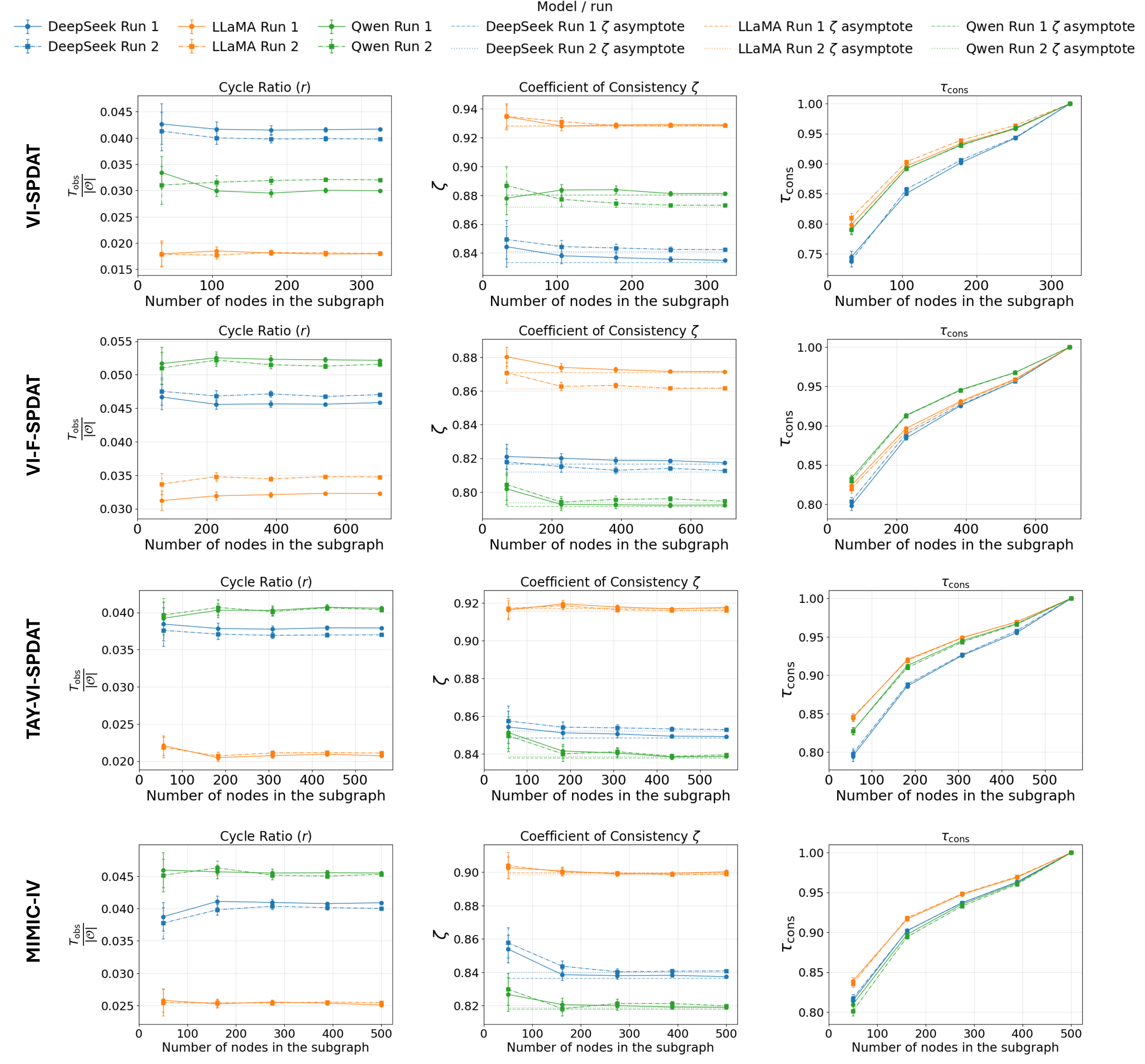}

    \caption{Cycle ratio, $\zeta$, and Rank Centrality (RC) Kendall $\tau$ as functions of the number of nodes sampled in the subgraph. Each model uses a consistent color across runs, while separate runs are distinguished by line style and marker. The two left columns show that $r= \frac{T_{obs}}{|\mathcal{O}|} $ and $\zeta$ computed from the comparison graphs corresponding to a sub-sample of items remain relatively unchanged, making it a good estimator of LLM comparison consistency even with low comparison samples. The right column shows the same cannot be stated for inter-rank consistency ( Kendall's $\tau$ =) between generated ranks using rank centrality from same comparison graphs, making it a relatively poor estimator of consistency without sufficiently large sample size.} 
    \label{fig:ratio-zeta-tau-alpha-sweep}
    \vspace{-1em}
\end{figure*}

\subsection{Experimental Results}
\label{sec:experiments}

\subsubsection{Subtournament Evaluation}
\label{sec:stage1}

For each dataset, we select $m = 30$ representative items and run all $\binom{30}{2} = 435$ pairwise comparisons per LLM judge, forming a complete subtournament. Item order in each comparison is chosen at random. We compute $\zeta$ exactly from the out-degree formula (Equation~\ref{eq:Tn}) and measure self-consistency $\tau_{\mathrm{cons}}$ by running the aggregator on two independent sets of comparisons drawn from the same judge. In addition, we apply the following bootstrap procedure. We repeatedly subsample $40\%$ of the edges of this subtournament. This matches the  observation probability $p = 0.4$ in what we call the \emph{full pipeline}, where all nodes are known, $40\%$ of edges are sampled, and a full ranking constructed. For each run, we record $\zeta$, and in addition the $\binom{100}{2} = 4,950$ pairwise $\tau_{\mathrm{cons}}$ values form an empirical distribution of self-consistency.

Figure~\ref{fig:bootstrap-grid} shows the result of this experiment for inter-run variability. In almost every case, the full-pipeline $\tau_{\mathrm{cons}}$, and the $\tau$ from the Stage~1 fall well to the right of the bootstrap distribution: the full pipeline is considerably more self-consistent than the subtournament. Meanwhile, the subtournament $\hat{\zeta}$ correctly estimates comparison quality (when scaled appropriately by Equation \ref{eq:zeta_decomp}), as in the synthetic sparse-observation experiment (Figure~\ref{fig:sparse-observation}): $\hat{\zeta}$ is unbiased regardless of observation budget.  Figure~\ref{fig:ratio-zeta-tau-alpha-sweep} confirms this in the real data: as the number of items grows, $\zeta$ behaves as predicted by Equation~\eqref{eq:zeta_limit} (the constant $r$ regime) while $\tau_{\mathrm{cons}}$ increases steadily across all datasets and LLM judges. The full pipeline observes more comparisons per item than any single bootstrap subsample, so its rankings are more stable even though the underlying comparison noise is the same.

\begin{table}[ht]
\centering
\small
\setlength{\tabcolsep}{4pt}
\begin{tabular}{lcccccc}
\toprule
& \multicolumn{2}{c}{DeepSeek}
& \multicolumn{2}{c}{LLaMA}
& \multicolumn{2}{c}{Qwen} \\
\cmidrule(lr){2-3}
\cmidrule(lr){4-5}
\cmidrule(lr){6-7}
Dataset
& $\bar{\zeta}$ & $\tau_{\mathrm{cons}}$
& $\bar{\zeta}$ & $\tau_{\mathrm{cons}}$
& $\bar{\zeta}$ & $\tau_{\mathrm{cons}}$ \\
\midrule
TAY-VI-SPDAT & $\circ$ & $-$     & $+$ & $\circ$ & $-$     & $+$ \\
VI-F-SPDAT   & $\circ$ & $\circ$ & $+$ & $-$     & $-$     & $+$ \\
VI-SPDAT     & $-$     & $-$     & $+$ & $\circ$ & $\circ$ & $+$ \\
MIMIC-IV     & $\circ$ & $\circ$ & $+$ & $-$     & $-$     & $+$ \\
\bottomrule
\end{tabular}
\caption{Relative model ranking by average $\zeta$ and Kendall rank correlation ($\tau_{\mathrm{cons}}$). $+$ indicates best, $\circ$ middle, and $-$ worst among the three models for that dataset and measure.}
\label{tab:hmls_mimic_zeta_tau_relative_symbols}
\end{table}

\subsubsection{Full Sparse Pipeline: $\zeta$ and $\tau_{obs}$ as Functions of $n$}
\label{sec:stage2}

For each dataset, we run the full pipeline on all $n$ items with edge sampling probability $p = 0.4$. This is to keep the number of calls to the LLM tractable. We estimate $\hat{\zeta}$ from the fully-observed triplets in the sparse data (Equation~\eqref{eq:zeta_hat}) and measure self-consistency by running the pipeline twice on independent edge samples. To estimate how our two measures scale with $n$, for each dataset, we take the comparisons performed above, randomly sample fractions of nodes $\in \{0.1, 0.325, 0.55, 0.775\}$, and construct a comparison graph with only the edges that had been sampled. We repeat this 100 times and empirically compute the expected $\tau, \zeta$ and $r= \frac{T_{obs}}{|\mathcal{O}|}$. The results are provided in Figure \ref{fig:ratio-zeta-tau-alpha-sweep}.

As $n$ increases, the inter-run variability of the ranks produced also decreases. This is because $\tau_{\text{cons}}$ depends on the number of effective comparisons per item, which increases as we get more nodes with the same sampling ratio of edges. However, the expected inconsistency rate $r$ remains approximately constant across values of $n$, placing these datasets in the constant-$r$ regime of the decomposition in Equation~\eqref{eq:zeta_decomp}. This is qualitatively different from the BTL model, where Proposition~\ref{prop:zeta_scaling} predicts $r = \Theta(1/n^2)$ and $\zeta \to 1$.

\subsection{Consequences}

The discussion above demonstrates in practice that $\zeta$ and $\tau_{\mathrm{cons}}$ are different ways of measuring consistency of a ranking method, and they can behave quite differently. Another natural question is whether a method that is good along one dimension is naturally also good along another. Our experiments also allow us to examine this question in the context of the three LLMs that we use. Table \ref{tab:hmls_mimic_zeta_tau_relative_symbols} shows which of the models performs best/middle/worst along each of these dimensions across the four datasets we look at.\footnote{
For complete results see Appendix~\ref{appen:sanmay_asked} 
}
We find that no model Pareto-dominates across both dimensions. In fact, Qwen is reliably best in terms of inter-run variability, but also reliably worse in terms of intra-run consistency! Meanwhile, LLaMA does best across all the datasets in terms of intra-run consistency, but it is twice worst and twice in the middle in terms of inter-run variability. DeepSeek typically occupies the middle position along both axes. Which dimension is more important? This will of course come down to a question of what a stakeholder values in a particular domain. But our results illustrate clearly the value of evaluating ranking along both these dimensions of consistency.

\section{Discussion}
\label{sec:discussion}

The central finding of this paper is that intra-run consistency and inter-run variance are genuinely independent axes of ranking reliability, not just in theory (Table~\ref{tab:ranking-methods}) but in practice. The synthetic calibration shows that even under the idealized BTL model, where a single parameter governs all comparison noise, the two axes are correlated but not redundant: tournaments with identical $\zeta$ can have very different $\tau_{\text{cons}}$ depending on the number of comparisons per item. The real-world experiments sharpen this separation further. The inconsistency rate $r$ remains approximately constant as $n$ grows, placing all four datasets in the constant-$r$ regime of our decomposition and confirming that LLM comparison noise does not have the rank-gap-dependent structure of the BTL model (where Proposition~\ref{prop:zeta_scaling} predicts $r = \Theta(1/n^2)$). Most strikingly, no LLM dominates across both dimensions: Qwen produces the most stable rankings across runs but the least internally consistent comparisons, while LLaMA exhibits the opposite profile. This means that choosing a model on the basis of only one diagnostic can be actively misleading. A practitioner who evaluated only inter-run variance would select Qwen; one who evaluated only intra-run consistency would select LLaMA. Neither choice is wrong, but neither is complete.
 
\noindent Our results suggest a simple diagnostic protocol. First, run a small complete subtournament and compute $\zeta$ exactly; because $\zeta$ is unbiased regardless of observation budget, this inexpensive step reliably estimates comparison quality before committing to the full pipeline. If $\zeta$ is low, the comparisons themselves are inconsistent in the sense of transitivity, and no amount of additional data will ``fix'' that issue. If $\zeta$ is high but $\tau_{\text{cons}}$ is low, the comparisons are consistent but too few have been collected, and the remedy is to increase the sampling fraction rather than change the model. This diagnostic separation between signal quality and signal quantity is the practical payoff of the $2 \times 2$ taxonomy.

\section*{Acknowledgements}
We are grateful for support from NSF Award 2533162. We also thank the various community partners who helped conceptualize the challenges facing the delivery of homeless services, as well as their ongoing efforts to support local families.

\bibliographystyle{unsrtnat}
\bibliography{references}

\newpage
\appendix
\onecolumn
\section{Additional Details on the Synthetic Model/Experiments}
\label{appen:synthetic}

This appendix provides implementation details and supplementary experiments for the synthetic experiments in Section~\ref{sec:synthetic}. Throughout, error bars on the $y$-axis represent $\pm 1$ standard deviation across 100 independent trials. Error bars on the $x$-axis, where present, represent the standard deviation of the observed $\zeta$ around its analytic expectation $\mathbb{E}[\zeta]$.

\begin{remark}[Regularization of RC at High $\beta$]
    \label{rem:damping}
    RC ranks items by the stationary distribution of a Markov chain whose transition matrix $\bm{P}$ is constructed from the tournament outcomes. In the single-comparison setting, an item that wins all $n - 1$ of its comparisons becomes an absorbing state: $P_{ii} = 1$ and the stationary distribution places all mass there, assigning zero score to every other item. Under the equally-spaced BTL model this occurs with probability $\prod_{j=1}^{n-1} \sigma(\beta j)$, which approaches 1 for moderate $\beta$. Following the standard PageRank remedy \citep{brin1998anatomy}, we replace $\bm{P}$ with $\bm{P}' = (1 - \alpha)\bm{P} + (\alpha/n)\bm{J}$, where $\bm{J}$ is the all-ones matrix and $\alpha = 0.01$. This eliminates absorbing states while leaving rankings in the moderate-noise regime essentially unchanged.
\end{remark}

\subsection{Matching $\zeta$ across $n$}
\label{appen:matching}
 
To ensure that calibration curves for different values of $n$ are evaluated at the same set of $\zeta$ values, we solve for the value of $\beta$ that yields a given target $\mathbb{E}[\zeta]$. Under the equally-spaced BTL model (Definition~\ref{def:esbtl}), the expected circular triad count admits an exact closed form:
\begin{equation}
\label{eq:expected-Tn}
\mathbb{E}[T_n] = \sum_{i < j < k} \bigl[ P_{ij}\, P_{jk}\, P_{ki} \;+\; P_{ji}\, P_{kj}\, P_{ik} \bigr],
\end{equation}
where $P_{ij} = \sigma\bigl(\beta(j - i)\bigr)$ is the pairwise comparison probability from Equation~\eqref{eq:pij}. The expected coefficient of consistency is then $\mathbb{E}[\zeta] = 1 - \mathbb{E}[T_n] / T_{\max}$. Because the mapping $\beta \mapsto \mathbb{E}[\zeta]$ is smooth and strictly increasing on $[0, \infty)$, ranging from $0$ as $\beta \to 0$ to $1$ as $\beta \to \infty$, there exists a unique $\beta^*$ satisfying $\mathbb{E}[\zeta(\beta^*, n)] = \zeta^*$ for every target $\zeta^* \in (0,1)$ and every $n$. We find $\beta^*$ via Brent's root-finding method~\citep{brent1973algorithms} applied to $\mathbb{E}[\zeta(\beta, n)] - \zeta^* = 0$, with tolerance $10^{-8}$. This allows all experiments to be evaluated at a common grid of $\zeta$ values $(0.1, 0.2, \ldots, 0.9)$, eliminating the confound of different $n$ values producing different $\zeta$ ranges at the same $\beta$.

\subsection{Proof of Proposition~\ref{prop:zeta_scaling}}
\label{appen:zeta_scaling_proof}
\zetascaling*

\begin{proof}
\textit{Triad anatomy.} For a triple $i < j < k$ with rank gaps $a = j - i$ and $b = k - j$, the third pairwise distance is $a + b$. Let $q_d = \sigma(-\beta d)$ be the upset probability at gap $d$. Under BTL the three pair outcomes are independent. The triad is cyclic in exactly two configurations:
\begin{itemize}
\item single long-edge upset: $i$ beats $j$, $j$ beats $k$, $k$ beats $i$;
\item two short-edge upsets: $j$ beats $i$, $k$ beats $j$, $i$ beats $k$.
\end{itemize}
Every other configuration is transitive. Therefore: 
    \begin{equation*}
        P(\mathrm{cycle} \mid a, b) = (1 - q_a)(1 - q_b) q_{a+b} + q_a q_b (1 - q_{a+b}).
    \end{equation*}

\noindent \textit{Upper bound on $E[T_n]$.} Since $\sigma(-x) \le e^{-x}$ for $x \ge 0$,
    \begin{equation*}
        P(\mathrm{cycle} \mid a, b) \le q_{a+b} + q_a q_b \le 2 e^{-\beta(a+b)}.
    \end{equation*}
The number of triples with gap pattern $(a, b)$ is $n - a - b$, so: 
    \begin{equation*}
        E[T_n] = \sum_{\substack{a, b \ge 1 \\ a + b \le n - 1}} (n - a - b) P(\mathrm{cycle} \mid a, b) \le 2 n \sum_{a, b \ge 1} e^{-\beta(a + b)} = 2 n \left(\frac{e^{-\beta}}{1 - e^{-\beta}}\right)^{ 2} = O(n),
    \end{equation*}
with constant depending only on $\beta$. \\

\noindent \textit{Lower bound on $E[T_n]$.} Consecutive triples have $a = b = 1$ and there are $n - 2$ of them. For each,
\begin{equation*}
    P(\mathrm{cycle} \mid 1, 1) \ge q_1^2 (1 - q_2),
\end{equation*}
a strictly positive constant in $\beta$. Hence $E[T_n] \ge (n - 2) q_1^2 (1 - q_2) = \Omega(n)$. \\

\noindent \textit{Denominator.} $T_{\max} = (n^3 - n)/24$ for odd $n$ and $(n^3 - 4n)/24$ for even $n$, both $\Theta(n^3)$. Combining, $E[1 - \zeta] = \Theta(n) / \Theta(n^3) = \Theta(1/n^2)$.
\end{proof}

\paragraph{Why the upward shift in Figure 1b with $n$} The expected number of circular triads is carried by triples with small rank gaps, since the cycle probability is exponentially small in the long-edge gap $a + b$. Small-gap triples are linear in $n$, while the total triad count is cubic. $\zeta$ therefore mixes a linear numerator with a cubic denominator, and at fixed $\beta$ it approaches $1$ at rate $1/n^2$. Matching two tournaments at the same value of $\zeta$ across different $n$ is consequently matching them on a statistic whose denominator absorbs $\Theta(n^3)$ trivially-transitive triples; the larger-$n$ tournament must have substantially smaller $\beta$ (noisier per-pair comparisons) to reach the same $\zeta$. Self-consistency, by contrast, is controlled by the $n - 1$ comparisons each item participates in and tracks per-pair noise directly through aggregation accuracy.

\subsection{Aggregator Comparison}
\label{appen:aggregator_comparison}

Here, we test whether the diagnostic value of $\zeta$ is specific to Rank Centrality or generalizes across aggregation methods. We fix $n = 50$ and run three aggregators on the same sampled tournaments: Rank Centrality, Bradley--Terry Maximum Likelihood Estimation (BT-MLE, via the iterative Luce spectral ranking algorithm), and Borda count (ranking by total win count). For each aggregator, we record accuracy $\tau_{\mathrm{acc}}$ and self-consistency $\tau_{\mathrm{cons}}$ at each target $\zeta$ value.

\begin{figure*}[h!]
    \centering
    \subfloat[Accuracy]{%
        \includegraphics[width=0.48\textwidth]{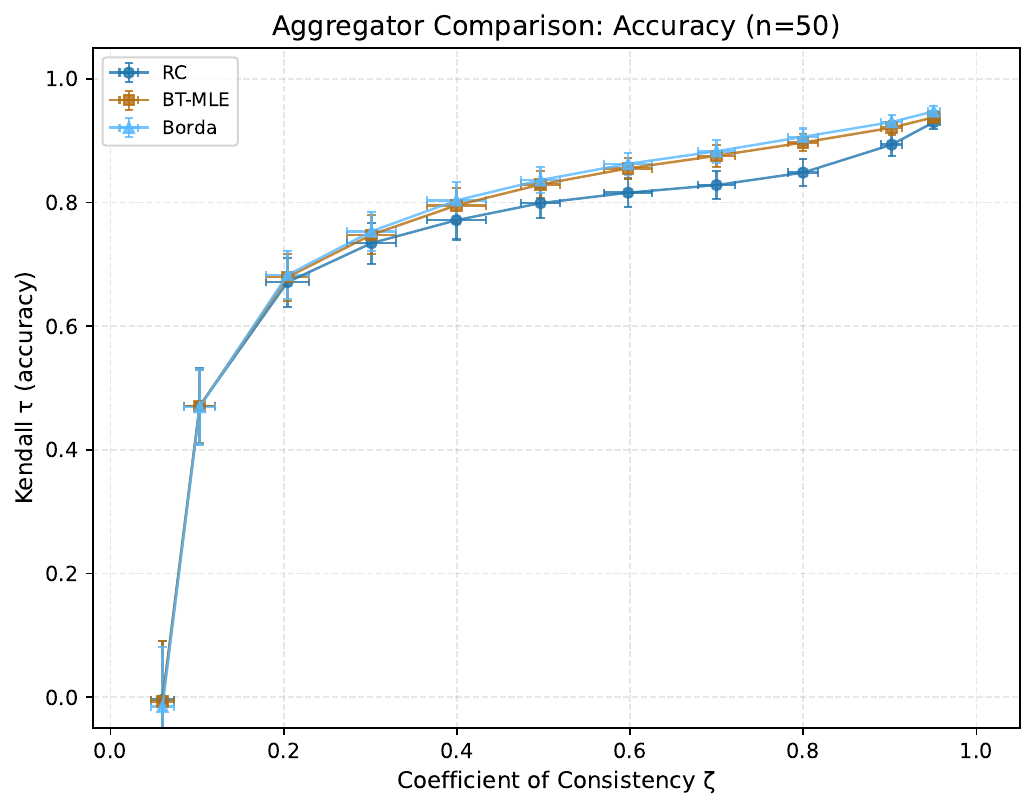}%
        \label{fig:exp2-accuracy}}
    \hfill
    \subfloat[Self-consistency]{%
        \includegraphics[width=0.48\textwidth]{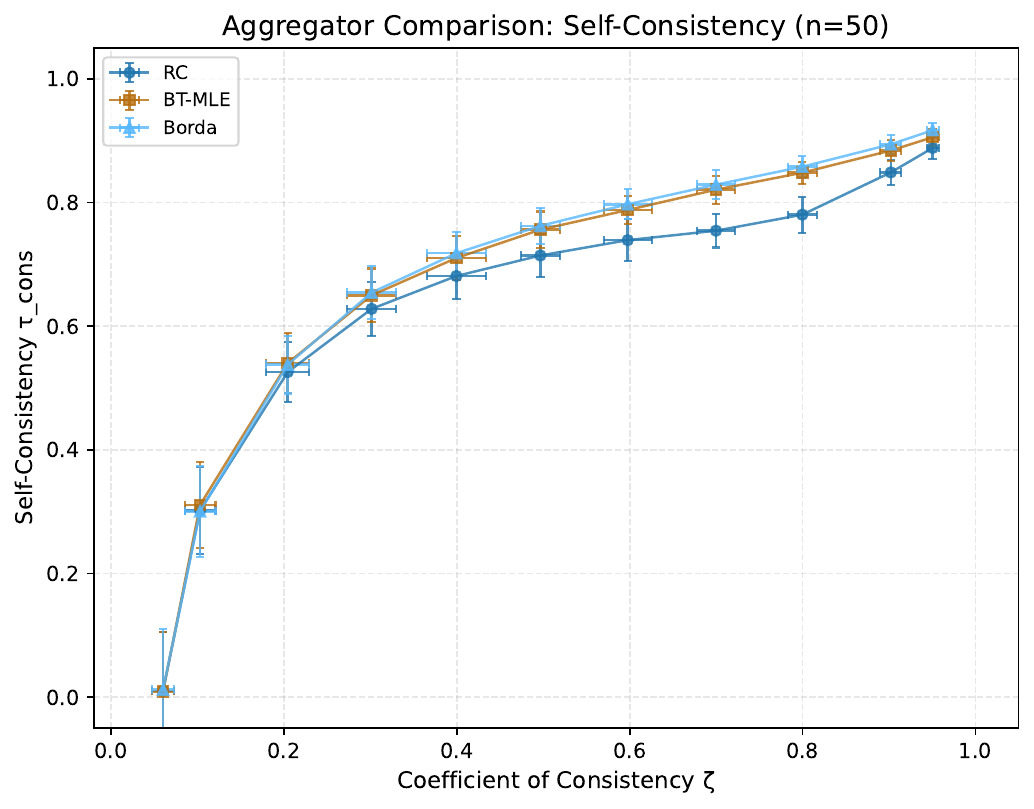}%
        \label{fig:exp2-consistency}}
    \caption{Aggregator comparison ($n = 50$). (a)~$\zeta$ vs.\ accuracy for RC, BT-MLE, and Borda; all three aggregators yield nearly identical accuracy as a function of $\zeta$, indicating that comparison quality dominates the choice of aggregation method. (b)~$\zeta$ vs.\ self-consistency for the same three aggregators; the curves are again nearly coincident.}
    \label{fig:aggregator_comparison}
\end{figure*}

\noindent Figure~\ref{fig:exp2-accuracy} shows that all three aggregators produce nearly identical accuracy curves as a function of $\zeta$. Figure~\ref{fig:exp2-consistency} shows the same for self-consistency. This establishes that $\zeta$ is a diagnostic for the \emph{data}, not for any particular aggregation algorithm -- the quality of the pairwise comparisons, as captured by $\zeta$, is the dominant factor, and the choice of aggregator is secondary.

\section{Dataset Details}
\label{app:datasets}

Both experimental domains share a common structure: each case is characterized by a set of features, a human expert assigns a discrete priority level, and the task is to decide which of two cases should be prioritized for intervention. The domains differ in population, feature representation, and label granularity.

\subsection{Homelessness Service Allocation}
\label{data:homelessness}

We use household-level vulnerability assessments from St.\ Louis coordinated entry records (2021-2024) \cite{stereetLevelAI}. The data comprise assessments from three population-specific VI-SPDAT instruments: VI-SPDAT for single adults ($n = 325$), VI-F-SPDAT for families ($n = 698$), and TAY-VI-SPDAT for transition-age youth ($n = 561$). Each record contains the full set of raw questionnaire responses (35 items for single adults, 54 for families, 41 for youth; see Appendix~\ref{appendix:ques} for the full set of questions) together with the corresponding total acuity score used in practice for prioritization. We treat the three instruments as separate datasets throughout, since they correspond to different populations with different questionnaires and scoring rubrics.

Each household is assigned to a priority band based on its total score. For single adults and youth, the bands are low ($0$--$3$), medium ($4$--$7$), and high ($8+$). For families, the bands are low ($0$--$3$), medium ($4$--$8$), and high ($9+$), following the established scoring rubrics~\citep{orgcode2015vispdat, orgcode2015vifspdat, orgcode2015tayvispdat}. Table~\ref{tab:band-thresholds-hmls} summarizes the thresholds, pool sizes, and the subtournament samples drawn with proportional representation across bands.

\begin{table}[h!]
\centering
\begin{tabular}{llcccc}
\toprule
Assessment & Band & Threshold & Pool & Pool \% & Sample \\
\midrule
\multirow{3}{*}{VI-SPDAT ($n = 325$)}
  & Low    & $0$--$3$ &  61 & 18.8\% &  6 \\
  & Medium & $4$--$7$ & 175 & 53.8\% & 16 \\
  & High   & $8+$     &  89 & 27.4\% &  8 \\
\addlinespace
\multirow{3}{*}{VI-F-SPDAT ($n = 698$)}
  & Low    & $0$--$3$ &  45 &  6.4\% &  2 \\
  & Medium & $4$--$8$ & 415 & 59.5\% & 18 \\
  & High   & $9+$     & 238 & 34.1\% & 10 \\
\addlinespace
\multirow{3}{*}{TAY-VISPDAT ($n = 561$)}
  & Low    & $0$--$3$ &  49 &  8.7\% &  3 \\
  & Medium & $4$--$7$ & 281 & 50.1\% & 15 \\
  & High   & $8+$     & 231 & 41.2\% & 12 \\
\bottomrule
\end{tabular}
\caption{Priority band score thresholds, pool sizes, and subtournament sample sizes ($m = 30$) by assessment type. Samples are drawn with proportional representation across priority bands.}
\label{tab:band-thresholds-hmls}
\end{table}

\subsection{Emergency Department Triage}
\label{data:mimic}

We use emergency department triage data from MIMIC-IV~\citep{mimic_iv}, a publicly available de-identified electronic health record database from Beth Israel Deaconess Medical Center in Boston. We link three MIMIC-IV modules: the ED module~\citep{mimic_iv_ed} (triage assessments and medication reconciliation), the hospital module (patient demographics), and the admissions table (supplementary sociodemographic fields for admitted patients).

\begin{table*}[h!]
\centering
\begin{tabular}{clp{5cm}rrrr}
\toprule
\textbf{ESI} & \textbf{Category} & \textbf{Definition} & \textbf{Pool} & \textbf{Pool \%} & \textbf{$m = 30$} & \textbf{$n = 500$} \\
\midrule
1 & Immediate   & Requires immediate life-saving intervention (e.g., cardiac arrest, respiratory failure)  & 5,600  &  3.1\% &  1 &  16 \\
2 & Emergent    & High-risk situation; patient should not wait (e.g., severe pain, altered mental status)   & 60,280 & 33.5\% & 10 & 167 \\
3 & Urgent      & Stable, but expected to require multiple resources (e.g., labs, imaging, IV medications)  & 98,919 & 54.9\% & 17 & 275 \\
4 & Less urgent & Stable, expected to require one resource (e.g., a single lab test or X-ray)               & 14,726 &  8.2\% &  2 &  41 \\
5 & Non-urgent  & Stable, expected to require no resources (e.g., prescription refill, simple wound check)  &    532 &  0.3\% &  0 &   1 \\
\bottomrule
\end{tabular}
\caption{Emergency Severity Index (ESI) acuity levels, clinical definitions, and sampling details. From $180{,}057$ eligible MIMIC-IV encounters, we draw two cohorts with proportional representation across ESI levels: a subtournament ($m = 30$) and a full pipeline ($n = 500$).}
\label{tab:esi-levels}
\end{table*}

Each record corresponds to a single ED visit. At arrival, triage nurses record temperature, heart rate, respiratory rate, oxygen saturation, systolic and diastolic blood pressure, a patient-reported pain score (0--10), and a free-text chief complaint. Medication reconciliation data document pre-visit medications, serving as a proxy for medical history. Demographics available at triage include age, gender, race, and mode of arrival (e.g., ambulance, walk-in); primary language and marital status are additionally available for admitted patients.

\begin{table}[H]
\centering
\small
\begin{tabular}{llp{4.5cm}}
\toprule
\textbf{Category} & \textbf{Prompt label} & \textbf{Description} \\
\midrule
\multirow{3}{*}{Demographics}
  & \texttt{age}          & Patient age at time of visit \\
  & \texttt{gender}       & Patient gender \\
  & \texttt{race}         & Patient self-reported race \\
\addlinespace
Arrival
  & \texttt{arrival\_mode} & Mode of arrival (e.g., ambulance, walk-in) \\
\addlinespace
\multirow{7}{*}{Vital signs}
  & \texttt{temperature\_in\_fahrenheit}       & Body temperature ($^\circ$F) \\
  & \texttt{heart\_rate\_bpm}                  & Heart rate (beats per minute) \\
  & \texttt{respiratory\_rate}                 & Respiratory rate (breaths per minute) \\
  & \texttt{o2\_saturation\_percent}           & Peripheral oxygen saturation (\%) \\
  & \texttt{systolic\_blood\_pressure\_mmhg}   & Systolic blood pressure (mmHg) \\
  & \texttt{diastolic\_blood\_pressure\_mmhg}  & Diastolic blood pressure (mmHg) \\
  & \texttt{pain\_score}                       & Patient-reported pain (0--10) \\
\addlinespace
Clinical presentation
  & \texttt{chief\_complaint}          & Free-text reason for presenting \\
\addlinespace
\multirow{3}{*}{Medications}
  & \texttt{current\_medications}      & Medications taken prior to visit \\
  & \texttt{medication\_classes}       & Therapeutic classes of medications \\
  & \texttt{num\_current\_medications} & Number of current medications \\
\bottomrule
\end{tabular}
\caption{Features presented to the LLM for each patient in pairwise triage comparisons. Field names are mapped from the original MIMIC-IV schema to the labels used in the prompt.}
\label{tab:mimic-features}
\end{table}

\noindent The ground-truth label is the Emergency Severity Index (ESI) acuity score, an integer from 1 (most acute) to 5 (least acute) assigned by the triage nurse. ESI is a validated five-level triage algorithm widely used in U.S.\ emergency departments~\citep{wolf2023esi}. Levels 1 and 2 are determined by patient acuity (immediate life-saving need vs.\ high-risk, should not wait); Levels 3 through 5 are distinguished by the number of resources the patient is expected to require. Table~\ref{tab:esi-levels} summarizes the ESI levels, pool sizes, and the two cohorts sampled with proportional representation. We restrict the sample to ED visits with a valid acuity score, excluding encounters in which the patient eloped, left without being seen, or left against medical advice. We require all triage feature fields to be non-missing and retain only the most recent visit per patient to avoid within-patient correlation. The resulting eligible pool contains $180{,}057$ unique patients.

\subsection{Experimental Data underlying Table \ref{tab:hmls_mimic_zeta_tau_relative_symbols}}
\label{appen:sanmay_asked}

\begin{table}[H]
\centering
\scriptsize
\resizebox{0.5\linewidth}{!}{%
\begin{tabular}{lrrrr}
\toprule
LLM & $\zeta_1$ & $\zeta_2$ & $\bar{\zeta}$ & $\tau_{\mathrm{cons}}$ \\
\midrule
DeepSeek & 0.849144 & 0.852818 & 0.850981 & 0.887179 \\
LLaMA    & 0.917566 & 0.916106 & 0.916836 & 0.923300 \\
Qwen     & 0.838598 & 0.839361 & 0.838979 & 0.924599 \\
\bottomrule
\end{tabular}%
}
\caption{$\zeta_1$, $\zeta_2$, mean $\bar{\zeta}$, and Kendall $\tau_{\mathrm{cons}}$ for TAY-VI-SPDAT.}
\label{tab:zeta_tau_tayvispdat}
\end{table}

\begin{table}[H]
\centering
\scriptsize
\resizebox{0.5\linewidth}{!}{%
\begin{tabular}{lrrrr}
\toprule
LLM & $\zeta_1$ & $\zeta_2$ & $\bar{\zeta}$ & $\tau_{\mathrm{cons}}$ \\
\midrule
DeepSeek & 0.812661 & 0.817402 & 0.815031 & 0.884729 \\
LLaMA    & 0.871484 & 0.861638 & 0.866561 & 0.879903 \\
Qwen     & 0.792314 & 0.794580 & 0.793447 & 0.930426 \\
\bottomrule
\end{tabular}%
}
\caption{$\zeta_1$, $\zeta_2$, mean $\bar{\zeta}$, and Kendall $\tau_{\mathrm{cons}}$ for VI-F-SPDAT.}
\label{tab:zeta_tau_vifspdat}
\end{table}

\begin{table}[H]
\centering
\scriptsize
\resizebox{0.5\linewidth}{!}{%
\begin{tabular}{lrrrr}
\toprule
LLM & $\zeta_1$ & $\zeta_2$ & $\bar{\zeta}$ & $\tau_{\mathrm{cons}}$ \\
\midrule
DeepSeek & 0.834824 & 0.842320 & 0.838572 & 0.852877 \\
LLaMA    & 0.928576 & 0.928845 & 0.928711 & 0.897892 \\
Qwen     & 0.881298 & 0.873071 & 0.877184 & 0.902564 \\
\bottomrule
\end{tabular}%
}
\caption{$\zeta_1$, $\zeta_2$, mean $\bar{\zeta}$, and Kendall $\tau_{\mathrm{cons}}$ for VI-SPDAT.}
\label{tab:zeta_tau_vispdat}
\end{table}

\begin{table}[H]
\centering
\scriptsize
\resizebox{0.5\linewidth}{!}{%
\begin{tabular}{lrrrr}
\toprule
LLM & $\zeta_1$ & $\zeta_2$ & $\bar{\zeta}$ & $\tau_{\mathrm{cons}}$ \\
\midrule
DeepSeek & 0.837359 & 0.840854 & 0.839107 & 0.896449 \\
LLaMA    & 0.900303 & 0.898932 & 0.899618 & 0.894846 \\
Qwen     & 0.819045 & 0.819877 & 0.819461 & 0.897267 \\
\bottomrule
\end{tabular}%
}
\caption{$\zeta_1$, $\zeta_2$, mean $\bar{\zeta}$, and Kendall $\tau_{\mathrm{cons}}$ for MIMIC.}
\label{tab:zeta_tau_mimic}
\end{table}

\section{LLM Prompts}
\label{app:llmprompts}

Each pairwise comparison is presented to the LLM as a single-turn conversation. The prompt contains the full feature representation of both cases (questionnaire responses for homelessness, triage features for ED visits) and asks the model to select which case should be prioritized, along with a short justification. We found that eliciting a brief explanation alongside the structured decision improved consistency compared to requesting only the formatted template. The placeholder \texttt{<insert block data>} is replaced with the relevant case data for each comparison.

\subsection{Pairwise Comparison Prompts}

Figures~\ref{fig:prompt-vispdat} and~\ref{fig:prompt-mimic} show the prompts used for the homelessness and ED triage domains respectively. The homelessness prompt is shared across all three VI-SPDAT instruments, with the questionnaire content varying by assessment type.

\begin{figure}[H]
    \centering
    \includegraphics[width=\textwidth]{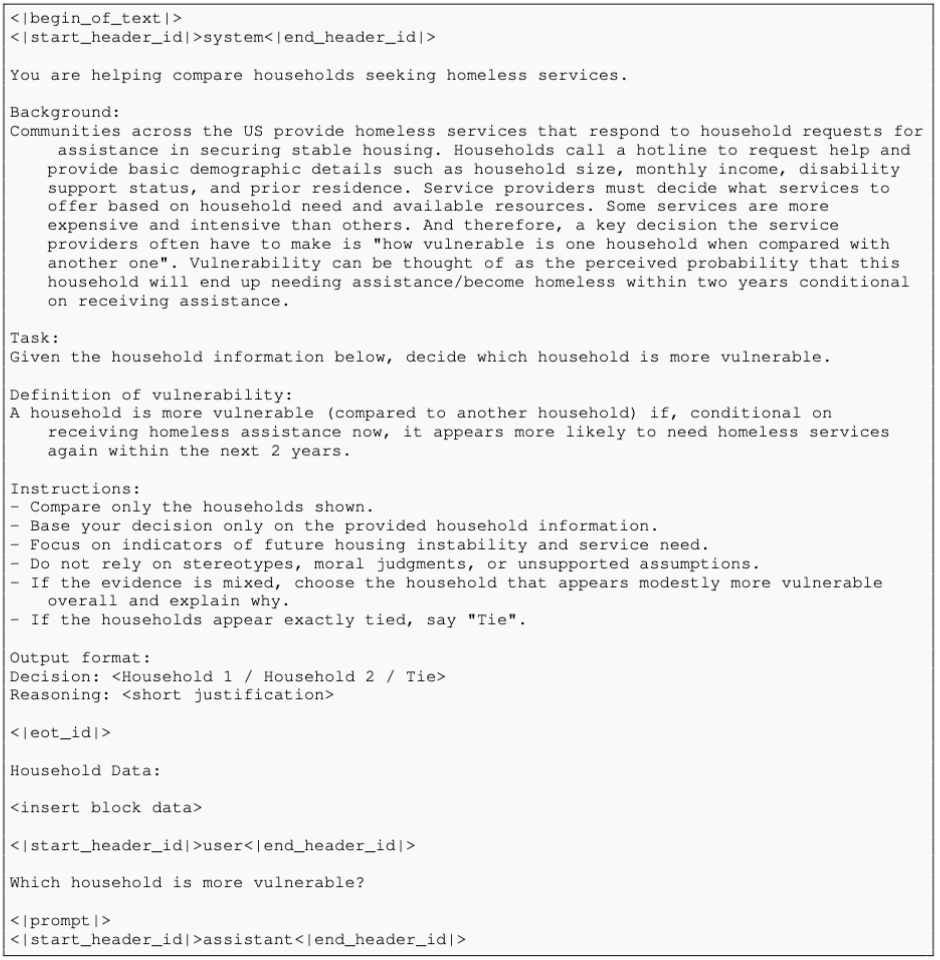}
    \caption{Pairwise comparison prompt for homelessness prioritization (VI-SPDAT, VI-F-SPDAT, and TAY-VI-SPDAT datasets).}
    \label{fig:prompt-vispdat}
\end{figure}

\begin{figure}[H]
    \centering
    \includegraphics[width=\textwidth]{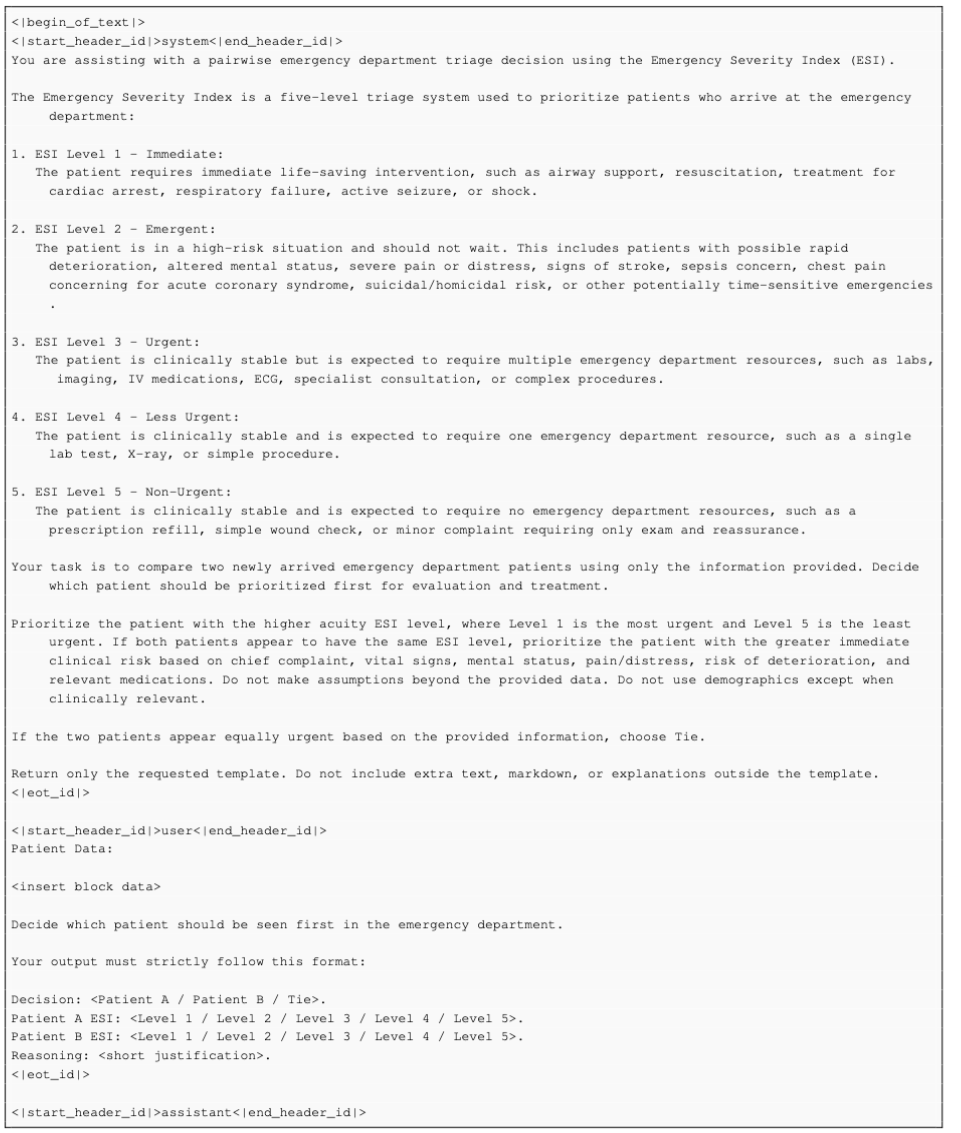}
    \caption{Pairwise comparison prompt for emergency department triage (MIMIC-IV dataset).}
    \label{fig:prompt-mimic}
\end{figure}

\subsection{Second-Level Response Parsing}

In practice, we found that LLMs occasionally produced contradictory outputs: the structured template would indicate one decision while the free-text explanation would argue for the opposite. To resolve these conflicts, we use a second-level LLM (\textit{Qwen2.5-7B-Instruct}) to parse the full output of the decision-making LLM and extract the final prioritization decision. Figures~\ref{fig:prompt-parser-vispdat} and~\ref{fig:prompt-parser-mimic} show the parsing prompts for the homelessness and ED triage domains respectively.
 
\begin{figure}[h!]
    \centering
    \fbox{\includegraphics[width=0.8\textwidth]{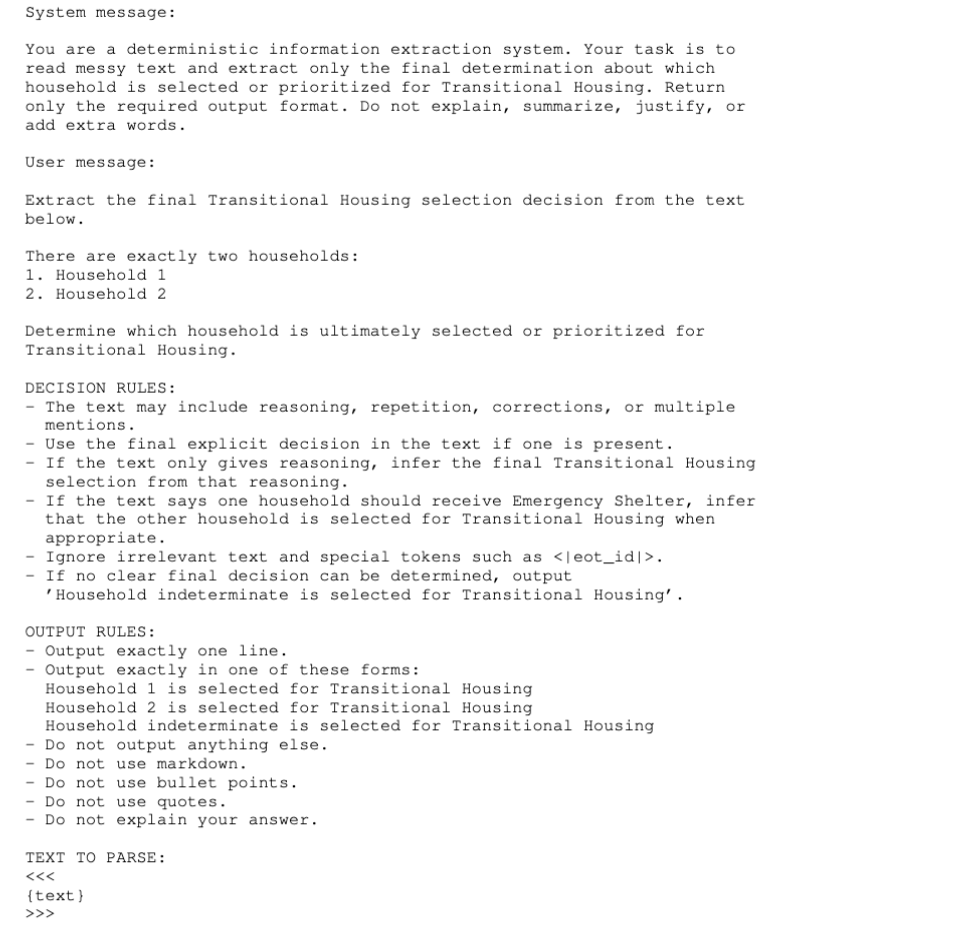}}
    \caption{Second-level parsing prompt for homelessness prioritization outputs.}
    \label{fig:prompt-parser-vispdat}
\end{figure}
 
\begin{figure}[H]
    \centering
    \includegraphics[width=0.8\textwidth]{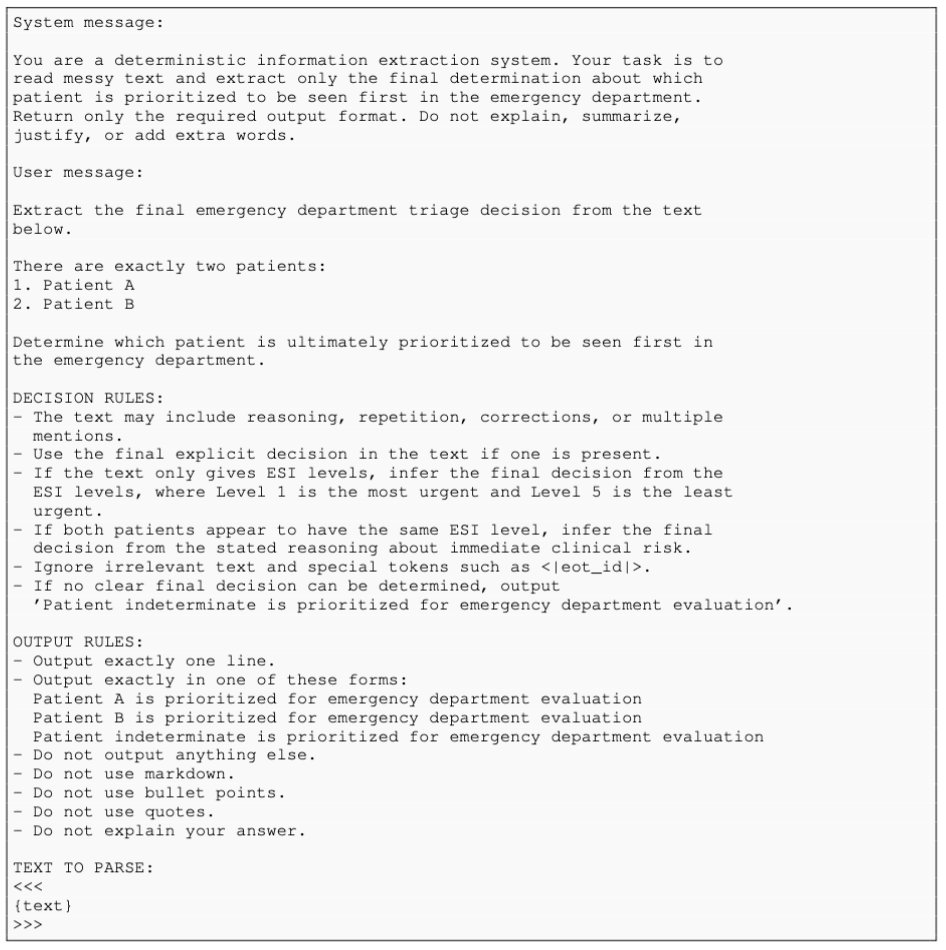}
    \caption{Second-level parsing prompt for emergency department triage outputs.}
    \label{fig:prompt-parser-mimic}
\end{figure}

\section{Questionnaires used in VI-SPDAT system to score vulnerability}
\label{appendix:ques}
\subsection{VI-SPDAT}
\label{appen:visp}
\begin{enumerate}

    \item Are you currently able to care for your basic needs, such as bathing, changing clothes, using the restroom, obtaining food, and accessing clean water?
    \item Are you not taking any medications that a doctor has prescribed for you?
    \item Are you taking prescribed painkillers incorrectly or selling them instead of using them as directed?
    \item Do you currently have legal issues that might result in incarceration, fines, or difficulties in renting housing?
    \item Do you engage in risky behaviors, such as exchanging sex for money, running drugs, having unprotected sex with strangers, sharing needles, or similar activities?
    \item Do you have a learning disability, developmental disability, or any other impairment?
    \item Do you have a mental health issue or concern?
    \item Do you have any mental health or cognitive issues that make it difficult to live independently?
    \item Do you have any physical disabilities that limit the type of housing you can access or make it difficult to live independently?
    \item Do you have planned activities—aside from mere survival—that make you feel happy and fulfilled?
    \item Do you receive income from the government, a pension, an inheritance, informal work, or a regular job?
    \item Do you suffer from any chronic health issues involving your liver, kidneys, stomach, lungs, or heart?
    \item Does anyone force or trick you into doing things against your will?
    \item For female respondents only: Are you currently pregnant?
    \item Has your alcohol or drug use resulted in you being kicked out of an apartment or shelter program in the past?
    \item Has your current period of homelessness been caused by experiencing emotional, physical, psychological, sexual, or other trauma? (Please answer YES or NO)
    \item Have you been attacked or beaten up since you became homeless?
    \item Have you been hospitalized as an inpatient?
    \item Have you ever had to leave your apartment, shelter program, or other living arrangement because of physical health issues?
    \item Have you experienced a head injury in the past?
    \item Have you received health care at an emergency department or room?
    \item Have you spent one or more nights in a holding cell, jail, or prison, regardless of the duration?
    \item Have you taken an ambulance to the hospital?
    \item Have you talked to the police because you witnessed a crime, were a victim or suspect, or were told to move along?
    \item Have you used a crisis service, such as those for sexual assault, mental health, family/intimate violence, distress, or suicide prevention?
    \item How long has it been since you lived in permanent, stable housing?
    \item If space were available in a program that specifically assists people living with HIV or AIDS, would you be interested?
    \item In the last three years, how many times have you experienced homelessness?
    \item In the last year, have you threatened or attempted to harm yourself or someone else?
    \item Is there any person or entity (for example, a past landlord, business, bookie, dealer, or government group like the IRS) that believes you owe them money?
    \item Is your current homelessness caused by a relationship breakdown, an unhealthy or abusive relationship, or actions by family or friends leading to eviction?
    \item When you are sick or not feeling well, do you avoid seeking help?
    \item Where do you sleep most frequently? (choose one)
    \begin{itemize}
        \item If you selected `Other', please specify further details.
    \end{itemize}
    \item Will alcohol or drug use make it difficult for you to maintain or afford housing?

\end{enumerate}

\subsection{VI-F-SPDAT}
\label{appens:vifsp}
\begin{enumerate}

    \item Has any child in the family experienced abuse or trauma in the last 180 days?
    \item Stayed one or more nights in a holding cell, jail, or prison, whether that was a short-term stay like the drunk tank, a longer stay for a more serious offense, or anything in between?
    \item Are any prescribed painkillers not taken as directed or being sold instead of used?
    \item Are there any children in your family aged 11 or younger?
    \item Are there any children in your family aged 6 or younger?
    \item Are there any family legal issues currently in or needing court resolution that might impact your housing or household composition?
    \item Do children aged 12 or younger spend 2 or more hours per day in activities?
    \item Do children aged 13 or older spend 3 or more hours per day in activities?
    \item Do you anticipate any additional adults or children joining your household within the first 180 days of being housed?
    \item Do your older children spend 2 or more hours per day helping their younger siblings with tasks such as preparing for school, homework, dinner, or bathing?
    \item Does any single member of your household have a medical condition, mental health concern, and issues with substance use?
    \item Does anyone in your family currently have legal issues that might lead to incarceration, fines, or difficulties in renting housing?
    \item Does anyone in your family force or trick you into doing things against your will?
    \item Does anyone in your family have a learning disability, developmental disability, or any other impairment?
    \item Does anyone in your family have a mental health issue or concern?
    \item Does anyone in your family have mental health or cognitive issues that make independent living challenging?
    \item Does anyone in your family have physical disabilities that limit the type of housing accessible or make independent living challenging?
    \item Does anyone in your family suffer from chronic health issues involving their liver, kidneys, stomach, lungs, or heart?
    \item Does every member of your family have planned activities—beyond just survival—that bring happiness and fulfillment?
    \item Does your family engage in risky behaviors such as exchanging sex for money, running drugs, having unprotected sex, sharing needles, or similar activities?
    \item Does your family have two or more planned activities each week, such as outings, library visits, family gatherings, or movie nights?
    \item Does your family receive income from the government, a pension, inheritance, informal work, or a regular job?
    \item Has alcohol or drug use by anyone in your family resulted in being evicted from your housing or shelter program?
    \item Has any family member been hospitalized as an inpatient?
    \item Has any family member received care at an emergency department or room?
    \item Has any family member taken an ambulance to the hospital?
    \item Has anyone in your family been attacked or beaten up since becoming homeless?
    \item Has anyone in your family experienced a head injury in the past?
    \item Has your family composition changed in the last 180 days due to factors such as divorce, children rejoining, someone leaving for military service or incarceration, or a relative moving in?
    \item Has your family ever had to leave an apartment, shelter, or other accommodation due to physical health issues?
    \item Has your family spoken to the police because they witnessed a crime, were victims, or were suspects, or because the police advised them to move along?
    \item Has your family used a crisis service such as those for sexual assault, mental health, family/intimate violence, distress, or suicide prevention?
    \item Has your family's current homelessness been caused by experiencing emotional, physical, psychological, sexual, or other trauma? (Please answer YES or NO)
    \item Have any children been removed from your family by a child protection service in the last 180 days?
    \item How long has it been since you and your family lived in permanent, stable housing?
    \item How many children under the age of 18 are currently with you?
    \item How many children under the age of 18 are not currently with your family but are expected to join you when you get housed?
    \item How many parents are in your family?
    \item If space were available in a program for people living with HIV or AIDS, would your family be interested?
    \item If there are school-aged children, do they attend school most weeks?
    \item If your household includes a female, is any member currently pregnant?
    \item In the last 180 days, have any children lived with family or friends due to your housing situation or homelessness?
    \item In the last three years, how many times has your family experienced homelessness?
    \item In the last year, has anyone in your family threatened or attempted to harm themselves or others?
    \item Is every member of your family capable of taking care of basic needs such as bathing, changing clothes, using the restroom, obtaining food, and accessing clean water?
    \item Is the head of your household 60 years of age or older?
    \item Is there any medication that a doctor prescribed for you or your family that is not being taken?
    \item Is there any person or entity (e.g., past landlord, business, bookie, dealer, government group like the IRS) that believes your family owes them money?
    \item Is your family's current homelessness caused by a relationship breakdown, an unhealthy or abusive relationship, or interventions by family or friends?
    \item Please provide a list of children's names and ages.
    \item When a family member is sick or unwell, does your family avoid seeking medical help?
    \item Where do you and your family sleep most frequently? (choose one)
    \begin{itemize}
        \item If you selected `Other', please specify further details.
    \end{itemize}
    \item Will alcohol or drug use make it difficult for your family to maintain or afford housing?
    
\end{enumerate}

\subsection{TAY-VI-SPDAT}
\label{appen:tay}
\begin{enumerate}

    \item Are you currently able to care for your basic needs (such as bathing, changing clothes, using a restroom, obtaining food, and accessing clean water)?
    \item Are you not taking any medications that a doctor prescribed for you?
    \item Are you taking prescribed painkillers incorrectly or selling them instead of using them as directed?
    \item Did conflicts regarding gender identity or sexual orientation contribute to your homelessness?
    \item Did differences in religious or cultural beliefs with your parents, guardians, or caregivers lead to your homelessness?
    \item Did you become homeless because you ran away from your family home, a group home, or a foster home?
    \item Do you currently have legal issues that might lead to incarceration, fines, or difficulties renting housing?
    \item Do you engage in risky behaviors (such as exchanging sex for money, food, drugs, or a place to stay; running drugs; having unprotected sex with strangers; sharing needles; etc.)?
    \item Do you have a learning disability, developmental disability, or any other impairment?
    \item Do you have a mental health issue or concern?
    \item Do you have any mental health or cognitive issues that hinder your ability to live independently?
    \item Do you have physical disabilities that limit your housing options or make living independently difficult?
    \item Do you have planned activities—beyond mere survival—that make you feel happy and fulfilled?
    \item Do you receive income from the government, an inheritance, an allowance, informal work, or a regular job?
    \item Do you suffer from chronic health issues involving your liver, kidneys, stomach, lungs, or heart?
    \item Does anyone force or trick you into doing things against your will?
    \item Has your alcohol or drug use resulted in eviction from your housing or shelter program?
    \item Have you been attacked or beaten up since you became homeless?
    \item Have you been hospitalized as an inpatient?
    \item Have you ever been pregnant, or have you ever gotten someone pregnant?
    \item Have you ever had to leave your living situation due to physical health issues?
    \item Have you experienced a head injury in the past?
    \item Have you received care at an emergency department or room?
    \item Have you spent one or more nights in a holding cell, jail, prison, or juvenile detention, regardless of the duration?
    \item Have you taken an ambulance to the hospital?
    \item Have you talked to the police because you witnessed a crime, were a victim or suspect, or were advised to move along?
    \item Have you used a crisis service (for sexual assault, mental health, family/intimate violence, distress, or suicide prevention)?
    \item How long has it been since you lived in permanent, stable housing?
    \item If space were available in a program for people living with HIV or AIDS, would you be interested?
    \item If you have tried marijuana, did you first try it at age 12 or younger?
    \item In the last three years, how many times have you experienced homelessness?
    \item In the last year, have you threatened or attempted to harm yourself or others?
    \item Is there any person or entity (e.g., a past landlord, business, bookie, dealer, or the IRS) that believes you owe money?
    \item Was your homelessness a result of violence among family members at home?
    \item Was your homelessness caused by an unhealthy or abusive relationship?
    \item Was your homelessness caused by your family or friends?
    \item Were you ever incarcerated before the age of 18?
    \item When you are sick or unwell, do you avoid seeking medical help?
    \item Where do you sleep most frequently? (choose one)
    \begin{itemize}
        \item If you selected `Other', please specify further details.
    \end{itemize}
    \item Will alcohol or drug use make it difficult for you to maintain or afford housing?
\end{enumerate}

\end{document}